\newtheorem{myremark}{Remark}
\newtheorem{mydefinition}{Definition}
\newtheorem{mylemma}{Lemma}
\newtheorem{mytheorem}{Theorem}
\begin{document}

\setlength{\abovecaptionskip}{1pt}
\setlength{\belowcaptionskip}{0pt}

\title{Approaches to Analysis and Design of AI-Based Autonomous Vehicles}

\author{Tao Yan,~\IEEEmembership{Member,~IEEE}, Zheyu Zhang, Jingjing Jiang,~\IEEEmembership{Member,~IEEE}, \\ and Wen-Hua Chen,~\IEEEmembership{Fellow,~IEEE}
\thanks{This work was supported by the UK Engineering and Physical
Sciences Research Council (EPSRC) Established Career Fellowship under
the grant number EP/T005734/1. \textit{(Corresponding author: Wen-Hua Chen.})}
\thanks{Tao Yan, Zheyu Zhang, Jingjing Jiang, and Wen-Hua Chen are with the Department of Aeronautical and Automotive Engineering, Loughborough University, Loughborough, LE113TU, U.K.~ \{t.yan, z.zhang8, j.jiang2,
w.chen\}@lboro.ac.uk }}




\maketitle

\begin{abstract}
Artificial intelligence (AI) models are becoming key components in an autonomous vehicle (AV), especially in handling complicated perception tasks. However, closing the loop through AI-based feedback may pose significant risks on reliability of autonomous driving due to very limited understanding about the mechanism of AI-driven perception processes. To overcome it, this paper aims to develop tools for modeling, analysis, and synthesis for a class of AI-based AV; in particular, their closed-loop properties, e.g., stability, robustness, and performance, are rigorously studied in the statistical sense. First, we provide a novel modeling means for the AI-driven perception processes by looking at their error characteristics. Specifically, three fundamental AI-induced perception uncertainties are recognized and modeled by Markov chains, Gaussian processes, and bounded disturbances, respectively. By means of that, the closed-loop stochastic stability (SS) is established in the sense of mean square, and then, an SS control synthesis method is presented within the framework of linear matrix inequalities (LMIs). Besides the SS properties, the robustness and performance of AI-based AVs are discussed in terms of a stochastic guaranteed cost, and criteria are given to test the robustness level of an AV when in the presence of AI-induced uncertainties. Furthermore, the stochastic optimal guaranteed cost control is investigated, and an efficient design procedure is developed innovatively based on LMI techniques and convex optimization. Finally, to illustrate the effectiveness, the developed results are applied to an example of car following control, along with extensive simulation.
\end{abstract}

\begin{IEEEkeywords}
Autonomous vehicles, AI-induced error modeling, closed-loop performance, stochastic stability, robustness, stochastic optimal guaranteed cost control. 
\end{IEEEkeywords}

\section{Introduction}
\IEEEPARstart{A}{UTONOMOUS} vehicles (AVs) have received increasing attention over the past decade in both industry and academia, due to the significant advancements in artificial intelligence (AI), especially the deep neural networks which greatly improve the sensing and perception (S\&P) capabilities and level up the autonomy of machines. While those deep AI models exhibit extremely better efficiency and accuracy than traditional S\&P approaches in tasks like classification and regression, new critical challenges will bring in when AI-based S\&P processes are integrated into an autonomous vehicle\cite{Margarita,Marti,Nascimento}. For instance, most AI models are trained in a black-box manner and  validated on limited datasets, and this makes it hard for engineers to grasp their robustness when something unseen or erroneous occurs. On the other hand, many automated driving tasks operate in a closed-loop manner. That means a small amount of uncertainty induced by the AI-based S\&P may propagate over time and adversely impact the decision-making and planning systems of vehicles. However, it is still unclear how and to what degree the AI-driven S\&P processes can affect the closed-loop behavior of an AV. Those issues are fundamental for us to answer how safe and reliable the automated driving is \cite{ZHANG2,Piazzoni,Nascimento}.

Virtual environments and simulations are the most common ways to test an automated driving system (ADS), as they reduce the risk of property damage as much as possible, and many driving scenarios can be conveniently built up and safety metrics can be easily accessed\cite{shuoFeng,9564358,Junyao}. While economically efficient, this type of approaches is challenging to accurately reflect the true characteristics of an ADS under test, since  the modeling for both environments and sensors can have major effects on the effectiveness of the results. To this end, one research route attempts to upgrade the fidelity of the simulation to capture more details, while it may scarify the computational efficiency as the modeling refines \cite{jianZhao}. To overcome this issue,  recently, the so-called perception error model (PEM) is proposed. Rather than seeking a direct modeling of input-output relation for individual sensors (e.g., LiDAR, radar, and camera), PEM focuses on overall perception errors that incorporate the uncertainty from both sensing and AI-based signal processing. Such models are then used for testing the safety of AI-based AVs across different set of scenarios\cite{Piazzoni,Innes}.

Even though the simulation-based testing is in some sense economic and efficient, due to the potential mismatch to the real world, it is essentially restrictive to reflect real dynamics of an ADS. In addition, such testing is mainly useful in assessing the safety and performance of vehicles with given ADSs, it is normally hard for engineers to gain insights into ADS design problems, as there is no analytical results concluded that reveal the relationship between driving policy, perception systems, and performance of AVs. Recently, studies have been reported in addressing decision-making and planning within uncertain environments. Perception-aware methods are developed in order to consider together the effect of perception quality when planing actions \cite{Falanga,Lee,Salaris,FARINA201653}. These approaches almost rely on either accurate modeling for perception or additional design of estimators. To account for the uncertainty, chance-constrained model predictive controllers are proposed to handle probabilistic constraints that introduced by the imperfect perception \cite{Bonzanini1,BONZANINI}.  An information gain maximization approach is presented for  path planing in occluded areas  \cite{Gilhuly}. It actively explores the uncertain region and thus avoid overly conservativeness, as frequently incurred in traditional worst-case based methods. In \cite{zhai2021}, a delayed velocity feedback is addressed for a car-following task.  An intelligent driving model (IDM) based controller is studied for vehicle platooning, which considers the effect of noisy measurements \cite{li2021}. Research in  \cite{liu2022pnnuad,villenas} deals with the potential uncertainty, e.g., misclassificaiton and packet loss, during the sensing and perception stages. 

 Understanding the closed-loop behavior of AI-powered autonomous vehicles would be particularly relevant to ensure safe and reliable driving. While some initial efforts have been made to explore how deep learning based  S\&P systems affect AV outputs, it is still far from comprehensive knowledge; for example, key questions like whether it is robust or, furthermore, how robustly an AV can perform against the uncertainty coming from perception are not well addressed. In addition, existing works on controller synthesis are primarily focused on handling either Gaussian-distributed perception errors or other isolated uncertainty and, in particular, most of them lack a formal verification. These limitations greatly obstruct their applicability in practice. To fill up the research gap, this paper aims to develop analysis and synthesis tools for a class of ADSs that are affected by multiple sources of AI-induced S\&P uncertainty. The main contributions of the work are threefold:
\begin{itemize}
    \item Inspired by the work on PEM \cite{Piazzoni,Innes}, the effects of AI-based S\&P systems to the vehicles are described using the form of perception errors. In particular, three different types of error patterns are identified and modeled using Markov chains, Gaussian processes, and bounded disturbances. Based on that, we present a PEM-based automated driving model (PEM-ADM), which formalizes the impacts of AI-based S\&P systems to an ADS and enables rigorous analysis. It is worth noting that PEM-ADM extends existing control system models by explicitly including a heterogeneous source of uncertainty in the feedback loop.
    \item With the help of PEM-ADM, the closed-loop properties of  AI-based ADSs are studied. More specifically, the stochastic stability of the closed-loop is established by checking the feasibility of a set of linear matrix inequalities (LMIs). It demonstrates that an AI-based ADS may not even be stochastically stable (SS), if certain conditions are violated. This offers insights into how AI-based S\&P models can affect the reliability of ADSs and, in turn, guides SS-aware ADS design. Further, a bound on ADS steady-states is provided as well, which reveals the relationship between steady-state performance, control policy, and S\&P systems. The LMI-based conditions are then extended to deal with the problem of SS control synthesis.
    \item The robustness issue of AI-based ADSs is further discussed. A novel concept of stochastic guaranteed costs is first introduced to quantify how robustly an ADS can behave against the AI-induced uncertainty. Criteria are developed to test the ADS robustness. In addition, to design controllers that ensure a specified level of robustness, conditions are derived in the form of LMIs, which can be efficiently verified in practice. It is interesting to note that the stochastic optimal guaranteed cost control can be addressed straightforwardly in the proposed LMI framework, while this type of problem is usually hard to solve in other frameworks. The results obtained are then applied to a case study of car following to demonstrate their usefulness.
    
\end{itemize}


The rest of the paper is organized as follows: Section~\ref{sec2} presents the PEM-ADM and describes the problems. The analysis and synthesis for closed-loop properties of an AI-based  ADS are discussed in Section~\ref{sec3}. Robustness and performance in terms of stochastic guaranteed costs are studied in Section~\ref{sec4}. A car-following case study is addressed in Section~\ref{sec5}. Section~\ref{sec6} conducts the simulation and Section~\ref{sec7} concludes the paper.

\textit{Notation:} Denote $\mathbb{E}[\cdot]$ as the mathematical expectation of a random variable. $\| x\|$ stands for the 2-norm of a vector $x$. $A^T$  denotes the transpose of a matrix $A$. A positive definite  matrix is denoted by $A > 0$ and, conversely, $A < 0$ a negative definite matrix. $\lambda^{max}(A)$, $\lambda^{min}(A)$ denote the maximum and minimum eigenvalues of $A$, respectively. Denote $A^{-1}$  as the inverse of $A$. $\text{diag}(\cdot)$ defines a diagonal matrix. $\text{tr}(A)$ represents the trace of $A$. In a symmetric block matrix, $*$ denotes the corresponding symmetric counterpart, while in an optimization problem it stands for the optimum.

\section{modeling and problem description} \label{sec2}
\subsection{Perception Error Model}
\begin{figure}[!tbp]
\centering
\includegraphics[width=0.45\textwidth,trim=0.85cm 0.3cm 0.9cm 0.3cm, clip]{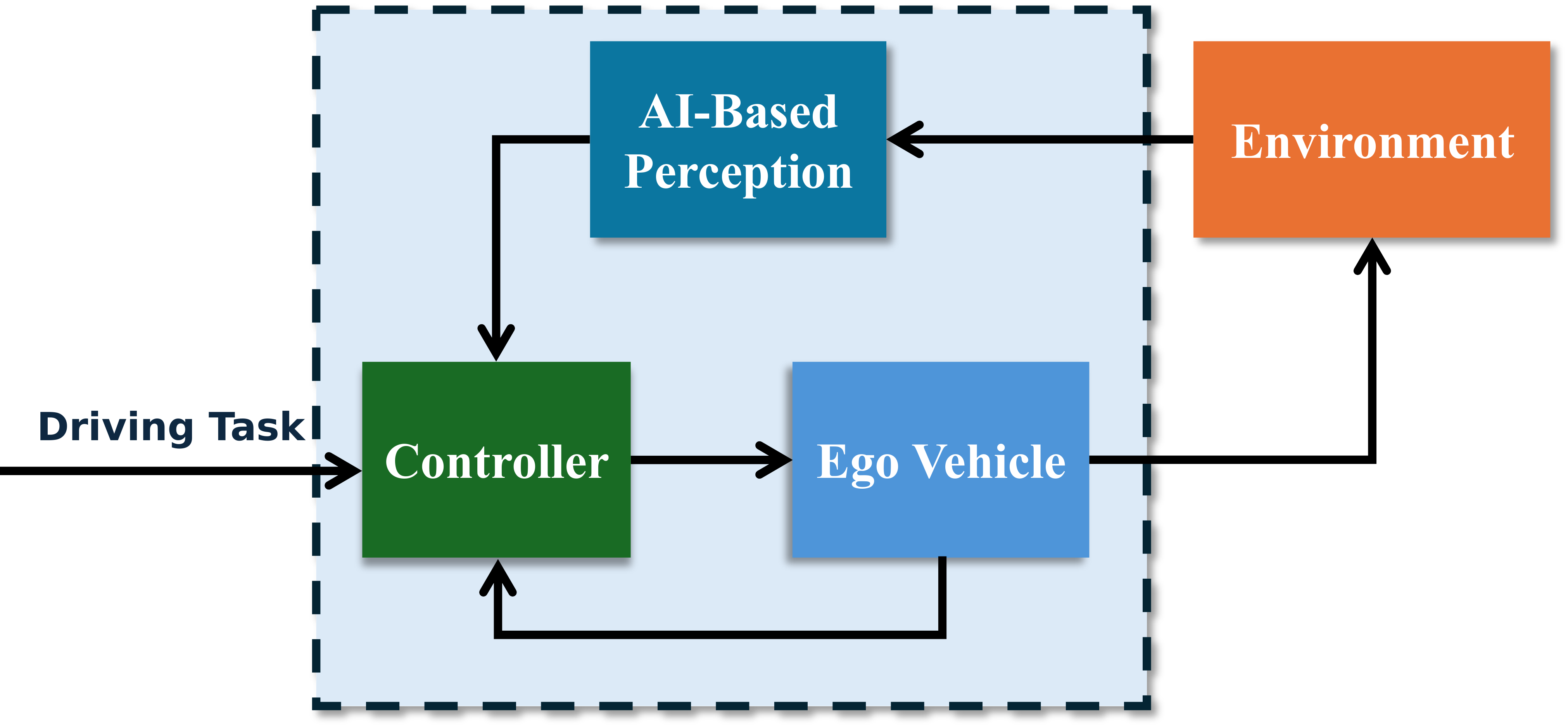}
\caption{Schematic of signal flows within an automated driving system.}
\label{fig:01}
\end{figure}

The signal flow schematic of a typical automated driving system (ADS)  is shown in Fig.~\ref{fig:01}. A key challenge in analyzing AI-based ADSs lies in the absence of a simple yet expressive way to represent the processes of sensing and perception. Recently, the perception error model (PEM) offers an approach to model AI-driven S\&P. Rather than directly modeling the input-output relationship of the S\&P, PEM focuses on the error characteristics induced by the S\&P. That is, PEM models the S\&P processes as
\begin{align}
    PEM(\mathcal{W}) = \mathcal{W}+ \mathcal{\varepsilon} \label{eq:1}
\end{align}
where $ \mathcal{W}$ denotes the environment and $\varepsilon$ captures the potential errors or uncertainty introduced by the S\&P processes. In this description, the focus shifts from modeling the entire S\&P pipeline to characterizing only the errors induced by it.

In this paper, we are mainly devoted to three important classes of  S\&P induced errors, that is, stochastic jumping, measurement noise, and bounded bias. These three have been demonstrated to be common and significant in practical S\&P systems and hamper the performance and safety of current AI-embedded AV systems \cite{Piazzoni}. In the following, we will present more detailed modeling on these typical perception errors.

\subsection{PEM-Based Automated Driving Modeling}
\begin{figure}[!tbp]
\centering
\includegraphics[width=0.45\textwidth,trim=0.85cm 0.3cm 0.9cm 0.3cm, clip]{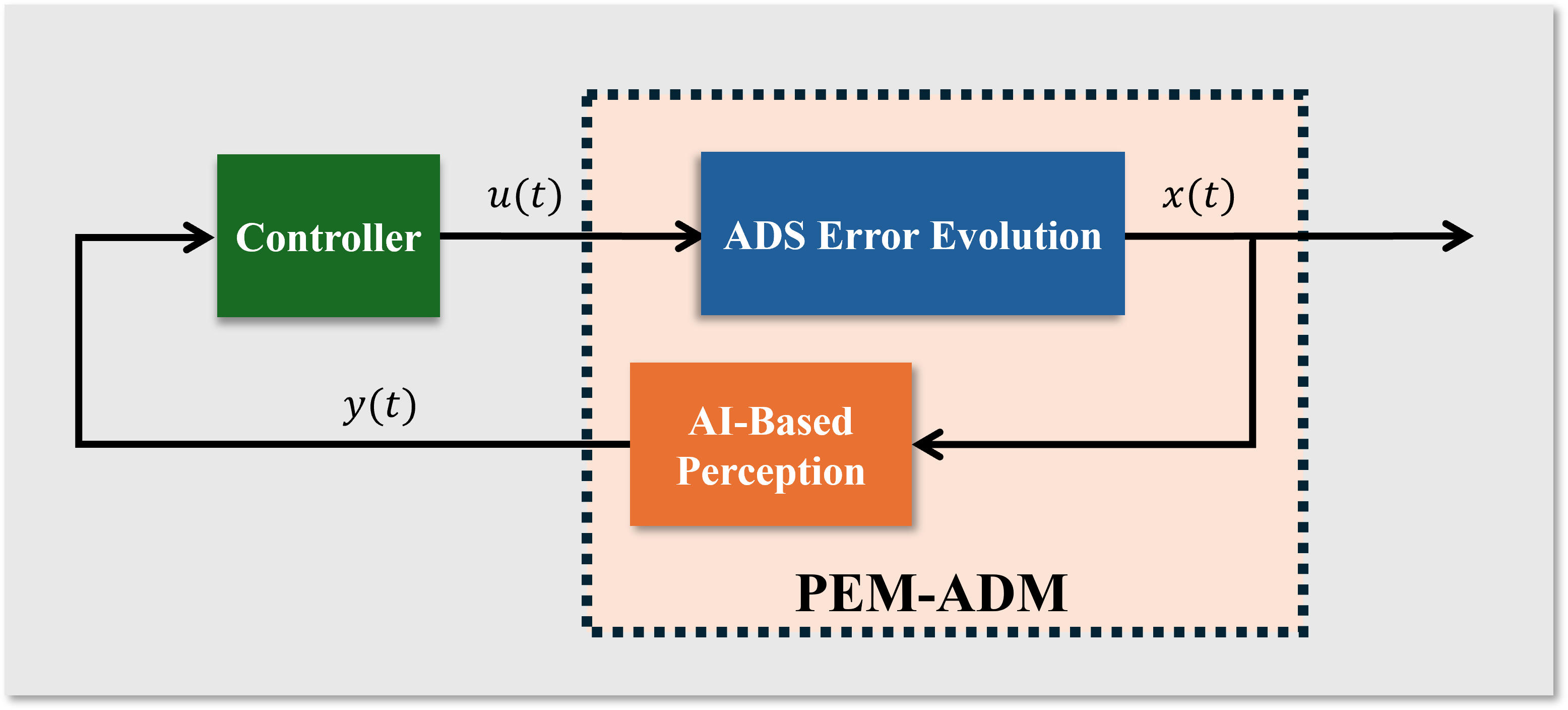}
\caption{Control block diagram for PEM-based automated driving model.}
\label{fig:02}
\end{figure}
Fig. \ref{fig:01} illustrates clearly how signals flow within an automated driving system, and this subsection introduces a control-oriented formulation which would make ADS analysis and synthesis easier. The dynamics of environments can be represented by
\begin{align}
    x^{env}(k+1) &= f^{env} \left( x^{env}(k) \right) \label{eq:2}
\end{align}
where $x^{env} $ denotes the state of the environment, which includes information like locations, poses, speeds, accelerations, etc., and $f^{env}(\cdot) $ governs the evolution of the state.

The motion of the ego vehicle is usually described by
\begin{align}
    x^{ego}(k+1) &= Ax^{ego}(k) + Bu (k) \label{eq:4}
\end{align}
with $x^{ego} $ being the state of the ego vehicle and $u$ the control input to be designed; $A$ and  $B$ are the system and input matrices, respectively.

The goal of ADSs can be typically formulated as maintaining certain synchronization with the environments, which contains tasks like lane keeping, car following, and cruise control. Without loss of generality, this paper considers that the objective of an ADS is to design  a control policy for $u$ to make $x^{ego}$ track $x^{env}$ over time. It should be noted also that, generally, the state dimensions of ego vehicles and environments do not necessarily match. However, the state of interest can be always tailored for specific driving tasks such that both are in the same space. Thus, this allows us to define the tracking error as $x = x^{ego}-x^{env} $, and  the following PEM-based automated driving model (PEM-ADM) is proposed to facilitate analysis and synthesis of AI-based ADSs:
\begin{align}
    &x (k+1) = Ax(k)  + B u(k)   \label{eq:5} \\
    &y (k) = C_i x(k) + D_i w(k) + E_i v(k) \nonumber \\
    & \quad i = r(t), \quad i\in \mathcal{I} = \left\{1,2,\cdots,N \right\} \label{eq:6}
\end{align}
where $x \in \mathbb{R}^{n_1} $, as already mentioned, indicates the deviation from the desired ones, and $u \in \mathbb{R}^{n_2} $ is the control input to be designed. $y\in \mathbb{R}^{n_3}$ denotes the perceived environment information produced by the AI-based S\&P systems, yet  corrupted by a heterogeneous sources of perception uncertainties; in particular, $w(k) \sim \mathcal{N}(0,1) $ denotes the i.i.d. standard Gaussian noise;  $v(k) \in \mathbb{R}^{n_3}$ with $\|v(k)\|$ being bounded accounts for the unknown low-frequency disturbance (e.g., sensor bias); $r(t) $ is a Markov chain defined by the transition probability matrix $\Xi=[p_{ij}] \in \mathbb{R}^{N\times N}$, which models some stochastic switching phenomena due to sensor failures, occlusions, and other algorithm failures. $C_i, D_i,E_i \in \mathbb{R}^{n_3\times n_3}$ are the observation, covariance, and bias matrices, respectively, and all of them can be influenced by stochastic switching.

\begin{myremark}
    The PEM-based automated driving model consists of two components: The difference equation \eqref{eq:5} describes the error evolution of an ADS, and \eqref{eq:6} is the measurement equation, which follows the PEM description \eqref{eq:1} to characterize AI-driven S\&P processes. It is important to note that the difference equation \eqref{eq:5} may not accurately capture the true error evolution of ADSs, as the behavior of environments can be much complicated, as indicated by \eqref{eq:2}, where $f^{env}(\cdot)$ represents any nonlinear function. In this work, the environment dynamics is purposefully simplified, since the main focus is to study the unique impacts caused by the AI-induced perception uncertainty. Actually, nonlinearities in system dynamics have been extensively studied in the nonlinear control literature.
\end{myremark}
\begin{myremark}
    The AI-based ADSs described by \eqref{eq:5} and \eqref{eq:6} are significantly different from traditional control system structures, where uncertainties typically appear in system equations while perfect state measurements are often assumed. In contrast, the AI-based systems may involve more diverse and complex sources of uncertainties in perception. This fundamental difference introduces new challenges for system analysis and synthesis.
\end{myremark} 
\subsection{Problem Description}
A control block diagram for the proposed PEM-ADM is illustrated in Fig. \ref{fig:02}. In order to analyze and design the closed-loop properties of ADSs, this paper considers the following control policy that uses the perceived environment information as the input
\begin{align}
    u(k) = K_i y(k), \qquad i \in \mathcal{I} \label{eq:7}
\end{align}
where $K_i \in \mathbb{R}^{n_2\times n_3}$ are the control gains subject also to the Markov switching. 

Plugging in \eqref{eq:7} into \eqref{eq:5} gives the closed-loop system:
\begin{align}
    x(k+1) &= (A+BK_iC_i )x(k) + B K_i  D_iw(k) + B K_i E_i v(k) \nonumber\\
    &  \stackrel{\text{def}}{=} A^{cl}_i x(k) + D^{cl}_i w(k) + E^{cl}_i v(k), \quad i \in \mathcal{I} \label{eq:8}
\end{align}
It is clear that due to the effects of unreliable AI-based perception, the closed-loop system~\eqref{eq:8} is subject to a heterogeneous sources of uncertainties which include stochastic switching, noise, and bias and affect AV's behavior convolutedly.

Before approaching it, the following concepts for stochastic stability (SS) properties are defined.
\begin{mydefinition}
    The solution $x(k)\equiv 0$ of \eqref{eq:8} is said to be uniformly bounded in mean square if there exists $M_0>0$ such that
\begin{align}
    \mathbb{E}[x(k)^2] \le M_0, \quad \forall ~ k \label{eq:9}
\end{align}
In particular, $x(k)$ is said to be uniformly ultimately bounded in mean square, if there exist $M_1, K>0$ such that 
\begin{align}
    \forall ~ k \ge K \quad \Rightarrow \quad \mathbb{E}[x(k)^2] \le M_1 \label{eq:10}
\end{align}
Furthermore, if $M_1 \to 0$ as $k \to \infty$, then we say  $x(k)$ is uniformly asymptotically stable in mean square.
\end{mydefinition}

The first goal of this paper is to develop a test for the stochastic stability of the closed-loop system~\eqref{eq:8} and provide an efficient method to synthesize stabilizing controllers. The second goal will further discuss how robust an ADS can be when subject to multiple sources of AI-induced uncertainties $r(k)$, $w(k)$ and $v(k)$. The third objective is to solve a class of stochastic optimal control problems associated with the system \eqref{eq:5}, \eqref{eq:6} and controller \eqref{eq:7}.



\section{ Stochastic Stabilizing Control} \label{sec3}
This section addresses the stochastic stability analysis and control synthesis for a class of AI-based ADSs \eqref{eq:5}, \eqref{eq:6} under the control structure~\eqref{eq:7}. 

\subsection{Closed-loop Stability Analysis}
The following lemmas are introduced to assist in the derivation of main results.
\begin{mylemma}\label{lem1}
    Consider a stochastic processes $(z_1(k),z_2(k))$ satisfying the Markov property. Let $z_{1,k} = z_1(k)$ and $z_{2,k} = z_2(k)$ be the values the processes take at each time $k$. Then, $z_1(k)$ is  uniformly ultimately bounded in mean square, if there exits a positive semidefinite function $W(z_1,z_2)$ such that for some  $c_1,c_3>0$, and $c_2 \in(0,1)$
    \begin{align}
        &c_1\| z_1\|^2  \le  W(z_1,z_2), \qquad \forall z_1,z_2 \label{eq:11}\\
            \mathbb{E}[W(z_1(k+1),&z_2(k+1))|z_{1,k},z_{2,k}] - W(z_{1,k},z_{2,k}) \nonumber  \\
            & \le - c_2   W(z_{1,k},z_{2,k}) + c_3 , ~ \forall z_{1,k},z_{2,k} \label{eq:12}
    \end{align}
\end{mylemma}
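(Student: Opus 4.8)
The plan is to establish the bound on $\mathbb{E}[\|z_1(k)\|^2]$ by iterating the one-step contraction inequality \eqref{eq:12} and then using the lower bound \eqref{eq:11} to transfer the conclusion from $W$ to $\|z_1\|^2$. First I would take the total expectation on both sides of \eqref{eq:12}; writing $V_k \stackrel{\text{def}}{=} \mathbb{E}[W(z_1(k),z_2(k))]$ and using the tower property of conditional expectation (which is available precisely because $(z_1,z_2)$ has the Markov property), inequality \eqref{eq:12} becomes the scalar recursion
\begin{align}
    V_{k+1} \le (1-c_2) V_k + c_3, \qquad \forall k. \nonumber
\end{align}
Since $c_2 \in (0,1)$, the factor $1-c_2 \in (0,1)$, so this is a stable affine recursion.

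Next I would solve this recursion explicitly. Unrolling from the initial value $V_0$ gives
\begin{align}
    V_k \le (1-c_2)^k V_0 + c_3 \sum_{j=0}^{k-1}(1-c_2)^j \le (1-c_2)^k V_0 + \frac{c_3}{c_2}, \qquad \forall k, \nonumber
\end{align}
where the geometric sum is bounded by $1/c_2$. This already shows $V_k$ is uniformly bounded (by $V_0 + c_3/c_2$), and moreover that $\limsup_{k\to\infty} V_k \le c_3/c_2$. To get the uniform ultimate bound in the sense of \eqref{eq:10}, note that because $(1-c_2)^k \to 0$, for any $\epsilon>0$ there is a $K$ (depending only on $c_2$, $V_0$, $\epsilon$) such that $(1-c_2)^k V_0 \le \epsilon$ for all $k \ge K$; hence $V_k \le c_3/c_2 + \epsilon$ for $k \ge K$.

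Finally I would invoke \eqref{eq:11}: taking expectations there yields $c_1 \mathbb{E}[\|z_1(k)\|^2] \le V_k$, so
\begin{align}
    \mathbb{E}[\|z_1(k)\|^2] \le \frac{1}{c_1}\left( (1-c_2)^k V_0 + \frac{c_3}{c_2} \right), \qquad \forall k, \nonumber
\end{align}
which gives uniform boundedness in mean square with $M_0 = (V_0 + c_3/c_2)/c_1$, and, combined with the ultimate-bound argument above, uniform ultimate boundedness in mean square with $M_1 = (c_3/c_2 + \epsilon)/c_1$ for $k \ge K$. I do not anticipate a serious obstacle here: the argument is a standard stochastic Lyapunov/comparison argument. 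The only point requiring mild care is ensuring that $W(z_1(0),z_2(0))$ has finite expectation so that $V_0 < \infty$ and the recursion is meaningful — this is implicit in treating the initial condition as deterministic (or square-integrable), and I would state it as a standing assumption. A second minor point is that the statement of \eqref{eq:10} quantifies over "there exist $M_1, K$" without an $\epsilon$, so I would simply fix, say, $\epsilon = c_3/c_2$ and report $M_1 = 2c_3/(c_1 c_2)$ with the corresponding $K$.
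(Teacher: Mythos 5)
Your proof is correct and follows essentially the same route as the paper's: iterate the one-step affine contraction implied by \eqref{eq:12}, bound the geometric sum by $c_3/c_2$, and transfer the bound to $\mathbb{E}[\|z_1(k)\|^2]$ via \eqref{eq:11}. The only cosmetic difference is that you unroll the recursion for the total expectation $V_k$ while the paper unrolls the expectation conditioned on $(z_{1,0},z_{2,0})$ using the Markov/tower property at each step; the resulting bounds and conclusions are identical.
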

\begin{proof}
      By \eqref{eq:12}, it is equivalent that for any $(z_{1,k-1},z_{2,k-1})$ the following is true
    \begin{align}
        &\mathbb{E}[W(z_1(k),z_2(k))|z_{1,k-1},z_{2,k-1}] \nonumber \\
        & \qquad \qquad \le (1-c_2) W(z_{1,k-1},z_{2,k-1}) + c_3 \label{eq:13}
    \end{align}
    The stochastic processes $(z_1(k),z_2(k))$ satisfy the Markov property and, hence, 
    \begin{align}
        &\mathbb{E}[W(z_1(k),z_2(k))|z_{1,k-2},z_{2,k-2}] \nonumber \\ 
        & = \mathbb{E}[\mathbb{E} [W(z_1(k),z_2(k))|z_1(k-1),z_2(k-1)]|z_{1,k-2},z_{2,k-2}] \label{eq:14}
    \end{align}
    Using \eqref{eq:13}, it follows that
    \begin{align}
         & \mathbb{E}[W(z_1(k),z_2(k))|z_{1,k-2},z_{2,k-2}] \nonumber \\ 
         & ~ \le (1-c_2)\mathbb{E}[   W(z_1(k-1),z_2(k-1)) | z_{1,k-2},z_{2,k-2}  ] + c_3 \nonumber \\
         & ~ \le  (1-c_2)^2 W(z_{1,k-2},z_{2,k-2})+ (1-c_2) c_3 +c_3 
    \end{align}
    Again, applying \eqref{eq:13} and \eqref{eq:14}  yields
    \begin{align}
        & \mathbb{E}[W(z_1(k),z_2(k))|z_{1,k-3},z_{2,k-3}] \nonumber \\
         &~ = \mathbb{E}[\mathbb{E} [W(z_1(k),z_2(k))|z_1({k-2}),z_2({k-2})]|z_{1,k-3},z_{2,k-3}]  \nonumber \\
         &~ \le (1-c_2)^3 W(z_{1,k-3},z_{2,k-3}) +(1-c_2)^2 c_3 \nonumber \\
          &\quad+(1-c_2) c_3 +c_3
         \label{eq:16}
    \end{align}
    Iterating this process gives
    \begin{align}
         & \mathbb{E}[W(z_1(k),z_2(k))|z_{1,0},z_{2,0}] \nonumber \\
         & ~= \mathbb{E}[\mathbb{E} [W(z_1(k),z_2(k))|z_1({1}),z_2({1})]|z_{1,0},z_{2,0}]  \nonumber \\
         & ~\le \quad \cdots \cdots  \nonumber \\
         &  ~\le (1-c_2)^k W(z_{1,0},z_{2,0}) + \sum _{n=0}^{k-1} (1-c_2)^n c_3 \nonumber \\
         & ~ = (1-c_2)^k W(z_{1,0},z_{2,0}) + \frac{c_3 [1-(1-c_2)^k]}{c_2}\label{eq:17}
        \end{align}
    In view of \eqref{eq:11},
    \begin{align}
        c_1 \mathbb{E}[\|z_1(k)\|^2 | z_{1,0},z_{2,0}] \le \mathbb{E}[W(z_1(k),z_2(k))| z_{1,0},z_{2,0}]  \label{eq:18}
    \end{align}
    Together with \eqref{eq:17}, it follows that
    \begin{align}
         \mathbb{E}[\|z_1(k)\|^2 |  z_{1,0},z_{2,0}] &\le  \frac{(1-c_2)^k}{c_1} W(z_{1,0},z_{2,0}) \nonumber \\
         & \quad+ \frac{c_3 [1-(1-c_2)^{k}]}{c_1c_2}
    \end{align}
    Let $k \to \infty$. We get
    \begin{align}
        \mathbb{E}[\|z_1(\infty)\|^2 ] \le \frac{c_3}{c_1c_2}
    \end{align}
    This implies the required result, which completes the proof.
\end{proof}

 \begin{mylemma}[Discrete Dynkin's Formula] \label{lem2}
    Let $(z_1(k),z_2(k))$ be the Markov processes and $W(z_1,z_2)$ be a scalar-valued function. For any $k =1,2,\cdots$ the following equality holds
        \begin{align}
           \mathbb{E}\left[\sum_{s=1}^k \mathbb{E}[W(z_1(s),z_2(s)) |z_{1,s-1},z_{2,s-1}] \right. \nonumber\\
           \left.-W(z_{1,s-1},z_{2,s-1}) \biggl | z_{1,0},z_{2,0} \right]  \nonumber \\
           = \mathbb{E}[W(z_1(k),z_2(k)) |z_{1,0},z_{2,0}] -  W(z_{1,0},z_{2,0}) \label{eq:21}
        \end{align}
    \end{mylemma}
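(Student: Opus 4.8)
The plan is to combine a telescoping identity with the tower property of conditional expectation and the Markov property of $(z_1(k),z_2(k))$. First I would write the right-hand side of \eqref{eq:21} as a telescoping sum of increments,
\[
W(z_1(k),z_2(k)) - W(z_{1,0},z_{2,0}) = \sum_{s=1}^k \bigl( W(z_1(s),z_2(s)) - W(z_1(s-1),z_2(s-1)) \bigr),
\]
and then apply $\mathbb{E}[\,\cdot \mid z_{1,0},z_{2,0}]$ to both sides. By linearity over the finite sum, the right-hand side of \eqref{eq:21} becomes $\sum_{s=1}^k \mathbb{E}\bigl[ W(z_1(s),z_2(s)) - W(z_1(s-1),z_2(s-1)) \mid z_{1,0},z_{2,0}\bigr]$.

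Next I would show, term by term, that this matches the generic summand on the left-hand side, i.e. that $\mathbb{E}[W(z_1(s),z_2(s))\mid z_{1,0},z_{2,0}] = \mathbb{E}\bigl[\mathbb{E}[W(z_1(s),z_2(s))\mid z_{1,s-1},z_{2,s-1}]\mid z_{1,0},z_{2,0}\bigr]$, while the term $W(z_1(s-1),z_2(s-1))$ of course only gets hit by $\mathbb{E}[\,\cdot\mid z_{1,0},z_{2,0}]$. This is exactly the manipulation already used in \eqref{eq:14} in the proof of Lemma~\ref{lem1}: by the Markov property, conditioning $W(z_1(s),z_2(s))$ on the full history up to time $s-1$ coincides with conditioning on $(z_{1,s-1},z_{2,s-1})$ alone; and since $\sigma(z_{1,0},z_{2,0})$ is contained in the $\sigma$-algebra generated by $(z_{1,0},z_{2,0},\dots,z_{1,s-1},z_{2,s-1})$, the tower property collapses the iterated conditional expectation. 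Substituting this identity into each summand and pulling the outer $\mathbb{E}[\,\cdot\mid z_{1,0},z_{2,0}]$ back through the finite sum by linearity recovers precisely the left-hand side of \eqref{eq:21}.

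The only points that require care are integrability — one assumes $W(z_1(s),z_2(s))$ is integrable for each $s$ so that every conditional expectation is well defined and linearity applies to the finite sum — and stating the Markov property in the two-component form for $(z_1,z_2)$; neither is a genuine obstacle. If anything, the mild subtlety is bookkeeping the nested conditioning $\sigma(z_{1,0},z_{2,0}) \subseteq \sigma(z_{1,0},z_{2,0},\dots,z_{1,s-1},z_{2,s-1})$ correctly when invoking the tower property, but this is the same step already exploited in \eqref{eq:14}, so it should go through without difficulty.
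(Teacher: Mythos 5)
Your argument is correct. The paper states Lemma~\ref{lem2} without proof (it is invoked as the standard discrete Dynkin formula), so there is nothing to diverge from; your telescoping decomposition combined with the tower property and the Markov property is exactly the standard derivation, and you correctly handle the one real subtlety --- namely that $\sigma(z_{1,0},z_{2,0})$ is \emph{not} contained in $\sigma(z_{1,s-1},z_{2,s-1})$, so one must first condition on the full history $\sigma(z_{1,0},\dots,z_{2,s-1})$, apply the tower property there, and only then use the Markov property to replace the full-history conditioning by conditioning on $(z_{1,s-1},z_{2,s-1})$. This is also consistent with the iterated-conditioning step the paper itself uses in \eqref{eq:14} in the proof of Lemma~\ref{lem1}; the integrability caveat you flag is appropriate since the paper's hypothesis only asks that $W$ be scalar-valued.
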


The following theorem establishes the stochastic stability of AI-based ADSs in terms of the feasibility of a linear matrix inequality (LMI).
\begin{mytheorem} \label{thm1}
The solution $x(k) \equiv0$ of the closed-loop system~\eqref{eq:8} is uniformly ultimately bounded in mean square, if there exist $P_i >0$  ($i \in \mathcal{I}$) such that the following LMI holds
    \begin{align}
        \sum_{j\in \mathcal{I}} p_{ij} (A_i^{cl})^TP_jA_i^{cl}  -  P_i < 0, \qquad i \in \mathcal{I} \label{eq:22}
    \end{align}
\end{mytheorem}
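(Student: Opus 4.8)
The plan is to apply Lemma~\ref{lem1} to the pair $(z_1(k),z_2(k)) = (x(k),r(k))$, which is a Markov process since $x(k+1)$ depends on the past only through $(x(k),r(k))$ and the fresh noise $w(k)$, and $r(k+1)$ only through $r(k)$. As the stochastic Lyapunov candidate I would take $W(x,i) = x^T P_i x$ with the $P_i>0$ provided by the LMI~\eqref{eq:22}. Condition~\eqref{eq:11} then holds automatically with $c_1 = \min_{i\in\mathcal{I}}\lambda^{min}(P_i)>0$ (using finiteness of $\mathcal{I}$), so the entire proof reduces to establishing the drift inequality~\eqref{eq:12} for this $W$, after which Lemma~\ref{lem1} delivers the claimed uniform ultimate boundedness in mean square.

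The first step is to compute the one-step conditional expectation of $W$ along the closed loop~\eqref{eq:8}. Writing $\bar P_i := \sum_{j\in\mathcal{I}} p_{ij} P_j > 0$ and $u_i := A^{cl}_i x + E^{cl}_i v(k)$ for the deterministic part of $x(k+1)$, and using that $w(k)$ is zero-mean, i.i.d.\ and independent of the Markov chain while $v(k)$ enters the conditioning as a known bounded quantity, a term-by-term expansion (in which the terms linear in $w(k)$ vanish) gives
\begin{align}
&\mathbb{E}[W(x(k+1),r(k+1)) \mid x(k)=x,\, r(k)=i] \nonumber\\
&\qquad = u_i^T \bar P_i u_i + \text{tr}\big((D^{cl}_i)^T \bar P_i D^{cl}_i\big). \nonumber
\end{align}

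The crux is to dominate the right-hand side by $\gamma\, x^T P_i x + c_3$ for some $\gamma\in(0,1)$ and finite constant $c_3$. From the strict LMI~\eqref{eq:22} and finiteness of $\mathcal{I}$ there is a uniform contraction factor $\beta := \max_{i\in\mathcal{I}}\lambda^{max}\big(P_i^{-1/2}(A^{cl}_i)^T \bar P_i A^{cl}_i P_i^{-1/2}\big) < 1$, whence $(A^{cl}_i)^T \bar P_i A^{cl}_i \le \beta P_i$ for every $i$. The main obstacle is the indefinite cross term in $u_i^T \bar P_i u_i = x^T(A^{cl}_i)^T\bar P_i A^{cl}_i x + 2 x^T(A^{cl}_i)^T \bar P_i E^{cl}_i v(k) + v(k)^T(E^{cl}_i)^T\bar P_i E^{cl}_i v(k)$; I would absorb it with the elementary estimate $2 a^T \bar P_i b \le \mu\, a^T \bar P_i a + \mu^{-1} b^T \bar P_i b$ (valid since $\bar P_i \ge 0$), applied with $a = A^{cl}_i x$, $b = E^{cl}_i v(k)$ and a free parameter $\mu>0$ taken small enough that $\gamma := (1+\mu)\beta < 1$. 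Combining this with $(A^{cl}_i)^T \bar P_i A^{cl}_i \le \beta P_i$ and the bound $\|v(k)\| \le \bar v$ on the bias yields
\begin{align}
&\mathbb{E}[W(x(k+1),r(k+1)) \mid x(k)=x,\, r(k)=i] \le \gamma\, x^T P_i x + c_3, \nonumber
\end{align}
with $c_3 := (1+\mu^{-1})\bar{v}^{2}\max_{i\in\mathcal{I}}\lambda^{max}\big((E^{cl}_i)^T \bar P_i E^{cl}_i\big) + \max_{i\in\mathcal{I}}\text{tr}\big((D^{cl}_i)^T \bar P_i D^{cl}_i\big) < \infty$.

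To conclude, I would rewrite the last display as $\mathbb{E}[W(x(k+1),r(k+1))\mid x(k)=x,\,r(k)=i] - W(x,i) \le -c_2 W(x,i) + c_3$ with $c_2 := 1-\gamma \in (0,1)$, which is exactly~\eqref{eq:12}; together with~\eqref{eq:11} this lets Lemma~\ref{lem1} apply, so $x(k)$ is uniformly ultimately bounded in mean square, as asserted. I expect the only non-routine points to be (i) extracting the uniform contraction factor $\beta<1$ from the strict LMI over the finite set $\mathcal{I}$, and (ii) the balancing of $\mu$ so that $(1+\mu)\beta$ remains below $1$ while $c_3$ stays finite; the remaining steps are routine matrix bookkeeping.
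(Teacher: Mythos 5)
Your proposal is correct and follows essentially the same route as the paper: the switched quadratic Lyapunov function $W(x,i)=x^TP_ix$, a Young-type inequality to absorb the indefinite cross term with the bounded bias $v(k)$ (your small-$\mu$ weighted version is the mirror image of the paper's large-$\varepsilon_1$ version), extraction of a strict margin from the LMI over the finite index set, and an appeal to Lemma~\ref{lem1}. Your bookkeeping via $\bar P_i=\sum_j p_{ij}P_j$ and the contraction factor $\beta<1$ is slightly tidier than the paper's term-by-term expansion, but the argument is the same in substance.
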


\begin{proof}
    Consider the following switched Lyapunov function candidate
\begin{align}
    V\left(x(k),r(k)\right) = x(k)^T P_ix(k) \label{eq:23}
\end{align}
where $P_i >0$ indexed by $i=r(k)$ which is governed by a Markov chain with probability transition matrix $\Xi = [p_{ij}]$. The forward difference of \eqref{eq:23} is defined as
\begin{align}
    \Delta V(k) = V(x(k+1),{r(k+1)})-V(x(k),{r(k)}) \label{eq:24}
\end{align}

Given $x(k)=x_k, r(k)=i$ at time $k$, the conditional expectation of \eqref{eq:24} can be calculated as
\begin{align}
    & \mathbb{E}[\Delta V(k)|x_k,i] =  \mathbb{E}[V(x(k+1),{r(k+1)})|x_k,i]-V(x_k,i)  \label{eq:25}
\end{align}
Let $j = r(k+1)$. By \eqref{eq:23}, together with \eqref{eq:8}, it yields
\begin{align}
    & \mathbb{E}[\Delta V(k)|x_k,i] = \mathbb{E}[V(x(k+1),j)|x_k,i]- x_k^T P_ix_k \nonumber \\
    &= \mathbb{E}[(A_i^{cl}x_k +D_i^{cl} w(k) + E_i^{cl} v(k))^T P_j \nonumber \\
    & \qquad \times(A_i^{cl}x_k +D_i^{cl} w(k) + E_i^{cl} v(k))] 
    - x_k^T P_ix_k
    \label{eq:26} 
\end{align}

    Notice that provided $x_k$ and $i$, $V(x(k+1),j)$ is a function of random variables $w(k)$ and $j$. Hence, by the properties of Gaussian processes and Markov chains, \eqref{eq:26} can be further calculated as
    \begin{align}
        & \mathbb{E}[\Delta V(k)|x_k,i] \nonumber \\
        &=\mathbb{E}[x_k^T(A_i^{cl})^TP_jA_i^{cl}x_k] +\mathbb{E}[x_k^T(A_i^{cl})^TP_jD_i^{cl}w(k)] \nonumber \\
        &~+\mathbb{E}[x_k^T(A_i^{cl})^TP_j E_i^{cl}v(k)] + \mathbb{E}[w(k)^T (D^{cl}_i)^TP_j A_i^{cl}x_k] \nonumber \\
        &~+  \mathbb{E}[v(k)^T (E^{cl}_i)^TP_j A_i^{cl}x_k] + \mathbb{E}[w(k)^T (D^{cl}_i)^TP_j D_i^{cl} w(k) ]  \nonumber \\
        &~+ \mathbb{E}[w(k)^T (D^{cl}_i)^TP_j E_i^{cl} v(k) ] +  \mathbb{E}[v(k)^T (E^{cl}_i)^TP_j  D_i^{cl} w(k)]  \nonumber \\
        &~+ \mathbb{E}[v(k)^T (E^{cl}_i)^TP_j  E_i^{cl} v(k)] - x_k^T P_ix_k\nonumber \\
        &= \sum_{j\in \mathcal{I}} p_{ij} x_k^T(A_i^{cl})^TP_jA_i^{cl}x_k+ \sum_{j\in \mathcal{I}} p_{ij} x_k^T(A_i^{cl})^TP_j E_i^{cl}v(k) \nonumber \\
        & ~+ \sum_{j\in \mathcal{I}} p_{ij} v(k)^T(E_i^{cl})^TP_jA_i^{cl}x_k + tr\left( \sum_{j\in \mathcal{I}} p_{ij}(D^{cl}_i)^TP_j D_i^{cl}\right) \nonumber \\
        &~ +\sum_{j\in \mathcal{I}} p_{ij}v(k)^T (E^{cl}_i)^TP_j  E_i^{cl} v(k) - x_k^T P_ix_k  
    \end{align}
    Applying Young's inequality, it follows that for any $\varepsilon_1>0$
    \begin{align}
        \mathbb{E}[\Delta V(k)|x_k,&i]  \le  \sum_{j\in \mathcal{I}} p_{ij} x_k^T(A_i^{cl})^TP_jA_i^{cl}x_k\nonumber\\
        & ~+ \varepsilon^{-1} _1 \sum_{j\in \mathcal{I}} p_{ij} x_k^T(A_i^{cl})^TP_j E_i^{cl}(E_i^{cl})^TP_jA_i^{cl}x_k \nonumber \\
        & ~ + \varepsilon _1 v(k)^Tv(k) + tr\left( \sum_{j\in \mathcal{I}} p_{ij}(D^{cl}_i)^TP_j D_i^{cl}\right) \nonumber \\
        &~ +\sum_{j\in \mathcal{I}} p_{ij}v(k)^T (E^{cl}_i)^TP_j  E_i^{cl} v(k) - x_k^T P_ix_k  \label{eq:28}
    \end{align}

    By the condition of the theorem, it implies that there is $\alpha_1>0$  such that for any $i \in \mathcal{I}$
    \begin{align}
        \sum_{j\in \mathcal{I}} p_{ij} x_k^T(A_i^{cl})^TP_jA_i^{cl}x_k  - x_k^T P_ix_k \le -\alpha_1 x^T_k x_k \label{eq:29}
    \end{align}
    Clearly, there is $\alpha_2 >0 $ such that for any $i \in \mathcal{I}$
    \begin{align}
        \sum_{j\in \mathcal{I}} p_{ij} x_k^T(A_i^{cl})^TP_j E_i^{cl}(E_i^{cl})^TP_jA_i^{cl}x_k \le \alpha_2 x^T_k x_k  \label{eq:30}
    \end{align}
   Therefore, we can let $\varepsilon_1$ sufficiently large so that 
    \begin{align}
        \alpha_3 \stackrel{\text{def}}{=} -\alpha_1 + \varepsilon^{-1}_1 \alpha_2 < 0 \label{eq:31}
    \end{align}
    On the other hand, in view of the boundedness of $v(k)$ it is easy to see that
    \begin{align}
        &\varepsilon _1 v(k)^Tv(k) + tr\left( \sum_{j\in \mathcal{I}} p_{ij}(D^{cl}_i)^TP_j D_i^{cl}\right) \nonumber \\
        &\quad + \sum_{j\in \mathcal{I}} p_{ij}v(k)^T  (E^{cl}_i)^TP_j  E_i^{cl} v(k) \le \alpha_4, \quad \forall  i \in \mathcal{I} \label{eq:32}
    \end{align}
    for some $\alpha_4 >0$.
    Hence, using \eqref{eq:29}--\eqref{eq:32}, it follows that
    \begin{align}
        \mathbb{E}[\Delta V(k)|x_k,i] \le -\alpha _3 x_k^Tx_k + \alpha_4 \label{eq:33}
    \end{align}
    Since $x^TP_ix >0 , \forall i \in \mathcal{I} $, we have
    \begin{align}
           x^TP_ix \le \lambda^{max}(P_i) x^T x
    \end{align}
    Therefore, \eqref{eq:33} is bounded by
    \begin{align}
        \mathbb{E}[\Delta V(k)|x_k,i] &\le -\alpha _3 (\lambda^{max}(P_i))^{-1} V(x_k,i) + \alpha_4 \nonumber \\
        & \le -\alpha_5 V(x_k,i)+ \alpha_4, \quad \forall  x_k,i \label{eq:35}
    \end{align}
    with $\alpha_5 =\alpha _3 \min\{ (\lambda^{max}(P_i))^{-1} \}$.
    By Lemma 1, this yields the required result.\end{proof}
    
\begin{myremark}
    Theorem \ref{thm1} reveals clearly the relationship between the closed-loop stability and transition probabilities~($p_{ij}$). It means that an AI-driven ADS may not be stochastically stable, if the condition \eqref{eq:22} is violated. In addition to that, through the bound \eqref{eq:35} obtained in the proof (particularly $\alpha_4$ as introduced in \eqref{eq:32}), it also suggests that the intensity of the noise and the bias in the sensing and perception systems can have significant impacts on the steady-state accuracy of an automated driving system.
\end{myremark}

\subsection{Stochastic Stabilizing Control Synthesis}
Theorem \ref{thm1} discusses the closed-loop properties of an AI-based AV system with a given controller.  This subsection shall be dedicated to control synthesis to help find candidate controllers that certify the desired stochastic stability. The following theorem treats this issue.

\begin{mytheorem} \label{thm:2}
    Consider the AI-based ADSs \eqref{eq:5}, \eqref{eq:6} with the control law \eqref{eq:7}. The closed-loop system is stochastically stabilizable, if there exist $S_i, Y_i  >0$ and $W_i$ such that the following LMIs hold for all $i \in \mathcal{I}$
    \begin{align}
    &\begin{bmatrix}
        -S_i &   [M_i(AS_i + BW_iC_i)]^T \\
        * &  -\Lambda  
    \end{bmatrix} <0 \nonumber \\
    &C_i S_i = Y_i C_i\label{eq:36}
\end{align}
with 
\begin{align*}
    M_i &= [\sqrt{p_{i1} }I, \cdots,\sqrt{p_{iN} }I ]^T\\
    \Lambda &= \text{diag}(S_1, \cdots,S_N)
\end{align*}
In particular, one admissible stochastic stabilizing control (SSC) gain is solved by $K_i=W_i Y_i^{-1}$.
\end{mytheorem}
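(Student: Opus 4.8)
\textit{Proof proposal.} The plan is to reduce this to Theorem~\ref{thm1}: I will show that feasibility of the LMIs \eqref{eq:36} is equivalent to the stability LMI \eqref{eq:22} holding for the choices $P_i = S_i^{-1}$ and $K_i = W_i Y_i^{-1}$, so that stochastic stability of the closed loop follows at once from Theorem~\ref{thm1}. First I would rewrite the summation in \eqref{eq:22} in a stacked form. With $\mathcal{P} := \text{diag}(P_1,\dots,P_N)$ and $M_i$ as defined in the statement, one has $\sum_{j\in\mathcal{I}} p_{ij}(A_i^{cl})^T P_j A_i^{cl} = (M_i A_i^{cl})^T \mathcal{P}\, (M_i A_i^{cl})$, so that \eqref{eq:22} is exactly $(M_i A_i^{cl})^T \mathcal{P}\,(M_i A_i^{cl}) - P_i < 0$. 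Since $\mathcal{P}>0$, a Schur complement converts this into the equivalent block inequality
\[
\begin{bmatrix} -P_i & (M_i A_i^{cl})^T \\ M_i A_i^{cl} & -\mathcal{P}^{-1} \end{bmatrix} < 0, \qquad i\in\mathcal{I}.
\]

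Next I would linearize by a congruence transformation. Pre- and post-multiplying the above block matrix by $\text{diag}(S_i, I)$ with $S_i := P_i^{-1}>0$ preserves sign definiteness and turns the $(1,1)$ block into $-S_i$, the $(2,2)$ block into $-\mathcal{P}^{-1} = -\Lambda$, and the off-diagonal block into $M_i A_i^{cl} S_i = M_i(A + BK_iC_i)S_i = M_i\big(AS_i + BK_iC_iS_i\big)$. The term $K_iC_iS_i$ is bilinear and is the only obstruction to an LMI; this is precisely where the equality constraint $C_iS_i = Y_iC_i$ is invoked. Under it, $K_iC_iS_i = K_iY_iC_i$, and the change of variable $W_i := K_iY_i$ renders the off-diagonal block equal to $M_i(AS_i + BW_iC_i)$, which recovers \eqref{eq:36}. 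Conversely, from a feasible triple $(S_i,Y_i,W_i)$ of \eqref{eq:36}, since $Y_i>0$ is invertible I set $P_i = S_i^{-1}$ and $K_i = W_iY_i^{-1}$; reversing the congruence and the Schur complement then yields \eqref{eq:22}, and Theorem~\ref{thm1} shows $x(k)\equiv 0$ is uniformly ultimately bounded in mean square, i.e., the closed loop is stochastically stabilizable with gains $K_i = W_iY_i^{-1}$.

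The main obstacle is the bilinear static-output-feedback term $K_iC_iS_i$: unlike the state-feedback case, no direct congruence linearizes it, and the device of imposing $C_iS_i = Y_iC_i$ (a linear equality in $S_i,Y_i$, compatible with each $C_i$ being square) together with the substitution $W_i = K_iY_i$ is the crux of the argument and must be handled carefully — in particular, the constraint $Y_i>0$ is what guarantees that the recovered gain $K_i = W_iY_i^{-1}$ is well defined. The remaining points are routine: the Schur-complement equivalence needs only $\mathcal{P}>0$, which holds because each $P_i>0$; the congruence by $\text{diag}(S_i,I)$ is invertible because $S_i = P_i^{-1}>0$; and it suffices to establish the "if" direction, so only the implication from feasibility of \eqref{eq:36} back to \eqref{eq:22} is actually required.
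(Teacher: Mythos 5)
Your argument is correct and follows essentially the same route as the paper's proof: substitute $P_i = S_i^{-1}$, use the identity $\sum_{j} p_{ij}S_j^{-1} = M_i^T\Lambda^{-1}M_i$ together with a Schur complement and a congruence by $S_i$, and absorb the bilinear term $K_iC_iS_i$ via the equality constraint $C_iS_i = Y_iC_i$ and the substitution $W_i = K_iY_i$. The only difference is cosmetic (you apply the Schur complement before the congruence rather than after), and you correctly note that only the implication from \eqref{eq:36} back to \eqref{eq:22} is needed, which matches the paper.
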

\begin{proof}
It is known by Theorem~\ref{thm1}  that the closed-loop system is stochastically stable if the condition  \eqref{eq:22} is satisfied. By the definition of $A_i^{cl}$ as given in \eqref{eq:8},  test \eqref{eq:22} is equivalent to
\begin{align}
      \sum_{j\in \mathcal{I}} p_{ij} (A+BK_iC_i)^TP_j(A+BK_iC_i)  -  P_i < 0, \quad \forall i\in \mathcal{I} \label{eq:37}
\end{align}
Let $P_i = S_i^{-1}$. Due to the fact that a congruent transformation preserves definiteness, we pre- and post-multiply \eqref{eq:37} by $S_i$ and obtain equivalently that for any $i\in \mathcal{I}$
\begin{align}
      S_i (A+BK_iC_i)^T (\sum_{j\in \mathcal{I}} p_{ij}S_j^{-1})(A+BK_iC_i) S_i  - S_i < 0 \label{eq:38}
\end{align}
Notice that $\sum_{j\in \mathcal{I}} p_{ij}S_j^{-1} = M_i^T \Lambda^{-1} M_i$ and  $S_i = S_i ^T$. Hence, by applying Schur complement, \eqref{eq:38} amounts to
\begin{align}
    \begin{bmatrix}
        -S_i & S_i^T (A +B K_i C_i)^TM_i^T \\
         * & - \Lambda
    \end{bmatrix} <0, \quad \forall i\in \mathcal{I}  \label{eq:39}
\end{align}
Using $ W_i= K_i Y_i$, it is not difficult to check that the condition~\eqref{eq:36} implies \eqref{eq:39}. This completes the proof.\end{proof}

\section{Stochastic Guaranteed Cost Control}\label{sec4}
The previous discussion is mainly central to the stochastic stability of an AI-driven ADS. While the stability plays a fundamental role in various AV applications, the performance validation is also critical for an ADS to perform satisfactorily, especially when AI-induced uncertainties are present. In addition, it is always of practical interest to study optimal synthesis in the presence of uncertainties. For this purpose, this section further discusses the performance and robustness of an AI-based autonomous driving system and presents a stochastic optimal control framework to reach the best guaranteed performance despite uncertainties.

\subsection{Stochastic Guaranteed Cost}
The following stochastic cost function is used to measure the performance of an AI-based ADS, which is common in most stochastic optimal control literature:
\begin{align}
    J = \mathbb{E} \left[\sum _{k=0}^{N} x(k)^T Q x(k) + u(k)^T R u(k) \biggm| x_0,r_0\right] \label{eq:41}
\end{align}
where $(x_0,r_0)$ is the initial state, $N =1,2,\cdots$ is a time horizon, and $Q,R>0$ are weighted matrices with appropriate dimensions.

It is usually challenging to directly optimize \eqref{eq:41}, since the control $u$ in our cases is based on uncertain feedback, containing multiple sources of errors. As a result, the cost $J$ will be convolutedly correlated with both control and various perception uncertainties. This makes optimization intractable. To overcome it, one natural way is to construct a nice upper bound for the cost function and look for an optimal guaranteed cost.

The following definition formalizes the above idea.
 \begin{mydefinition}
        It is said that an uncertain system associated with the cost function \eqref{eq:41} has a stochastic guaranteed cost with respect to the uncertainties $r(k)$, $w(k)$, and $v(k)$, if there exist some common constants $\gamma,M_2 >0$ such that $\forall N, r,w,v $ we have
        \begin{align}
             J &= \mathbb{E} \left[\sum _{k=0}^{N} x(k)^T Q x(k) + u(k)^T R u(k) \biggm| x_0,r_0\right] \nonumber \\
             &\le \gamma ^2 \mathbb{E}\left[ \sum _{k=0}^{N} \begin{bmatrix}
             w(k)^T&v(k)^T\end{bmatrix} \begin{bmatrix}
                 w(k)\\
                 v(k)
             \end{bmatrix}\biggm| x_0,r_0\right] + M_2\nonumber \\ 
             &  \stackrel{\text{def}}{=}\bar{J}
             \label{eq:42}
        \end{align}
        In particular, $\bar{J}$ is said to be a stochastic $\gamma$-guaranteed cost, $u(k)$  be a stochastic $\gamma$-guaranteed cost controller, where $\gamma $ quantifies the robustness level of the system against uncertainties.
    \end{mydefinition}

The next theorem establishes the existence of a stochastic guaranteed cost for an AI-based ADS\eqref{eq:5}, \eqref{eq:6}.

\begin{mytheorem} \label{thm3}
    Consider the closed-loop system \eqref{eq:8} generated by the control law \eqref{eq:7}, associated with the cost function \eqref{eq:41}. It has a stochastic guaranteed cost with respect to the uncertainties $r(k)$, $w(k)$, and $v(k)$, if there exist $P_i>0$ and $\gamma >0$ such that the following LMI holds
    \begin{align}
    \begin{bmatrix}
            \Pi_{11} & \Pi_{12} & 0  \\
             * & \Pi_{22} & 0\\
             * & * & \Pi_{33} \\
        \end{bmatrix} < 0 \label{eq:43}    
    \end{align}
    \vskip -10pt
    \begin{align*}
        \Pi_{11} &= \sum_{j\in \mathcal{I}} p_{ij}(A_i^{cl})^TP_jA_i^{cl} -P_i + Q +(K_iC_i)^T R K_iC_i \\
        \Pi_{12} &=  \sum_{j\in \mathcal{I}} p_{ij}(A_i^{cl})^TP_jE_i^{cl} + (K_iC_i)^T R K_i E_i \\
        \Pi_{22} & =   \sum_{j\in \mathcal{I}}p_{ij}(E_i^{cl})^TP_jE_i^{cl} + (K_iE_i)^T RK_iE_i -\gamma^2 I \\
        \quad \Pi_{33} &=    \sum_{j\in \mathcal{I}}p_{ij}(D_i^{cl})^TP_jD_i^{cl}+(K_iD_i)^T R K_i D_i-\gamma^2I
    \end{align*}
\end{mytheorem}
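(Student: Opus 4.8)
The plan is to reuse the switched Lyapunov function $V(x(k),r(k)) = x(k)^T P_i x(k)$ from the proof of Theorem~\ref{thm1} and show that, at each step, the running cost $x(k)^T Q x(k) + u(k)^T R u(k)$ is dominated by the negative Lyapunov increment plus a $\gamma^2$-weighted noise/disturbance term, up to an additive constant that telescopes to $x_0^T P_{r_0} x_0$. This immediately yields the bound \eqref{eq:42} with $M_2 = x_0^T P_{r_0} x_0$.

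First I would substitute $u(k) = K_i\bigl(C_i x(k) + D_i w(k) + E_i v(k)\bigr)$ into the running cost and compute the conditional expectation $\mathbb{E}[\Delta V(k) + x_k^T Q x_k + u(k)^T R u(k) \mid x_k, i]$ by expanding $V(x(k+1),j)$ through the closed-loop dynamics \eqref{eq:8}, exactly as in the proof of Theorem~\ref{thm1}: using that $w(k)$ is zero-mean with unit covariance and independent of $x_k$ and of $j = r(k+1)$, every cross term between $w(k)$ and $x_k$ and between $w(k)$ and $v(k)$ vanishes, while the $w(k)^T(\cdot)w(k)$ terms (coming both from $\Delta V$ via $D_i^{cl}$ and from $u^T R u$ via $K_i D_i$) collapse to traces. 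Collecting the surviving quadratic terms in $x_k$ and $v(k)$ and the trace terms, I expect the identity
\[
\mathbb{E}\bigl[\Delta V(k) + x_k^T Q x_k + u(k)^T R u(k) \mid x_k, i\bigr] = \mathbb{E}\bigl[\xi(k)^T \Pi\, \xi(k) \mid x_k, i\bigr] + \gamma^2 v(k)^T v(k) + \gamma^2 n_3,
\]
where $\xi(k) = [\,x_k^T,\ v(k)^T,\ w(k)^T\,]^T$ and $\Pi$ is the block matrix in \eqref{eq:43}: the zero off-diagonal blocks of $\Pi$ are precisely the $x$--$w$ and $v$--$w$ couplings that average out, and $\mathbb{E}[w(k)^T \Pi_{33} w(k)] = \text{tr}(\Pi_{33})$. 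Matching the $v$--$v$ block forces the $-\gamma^2 I$ in $\Pi_{22}$ together with the leftover $\gamma^2 v^T v$, and matching the trace block forces the $-\gamma^2 I$ in $\Pi_{33}$ together with the leftover $\gamma^2\,\text{tr}(I) = \gamma^2 n_3$.

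Since \eqref{eq:43} states $\Pi < 0$, we have $\mathbb{E}[\xi(k)^T \Pi\, \xi(k) \mid x_k, i] \le 0$, which gives the per-step bound $\mathbb{E}[x_k^T Q x_k + u(k)^T R u(k)\mid x_k,i] \le -\mathbb{E}[\Delta V(k)\mid x_k,i] + \gamma^2 v(k)^T v(k) + \gamma^2 \mathbb{E}[w(k)^T w(k)]$. Taking $\mathbb{E}[\,\cdot\mid x_0,r_0]$ on both sides, summing over $k = 0,\dots,N$, and invoking the discrete Dynkin formula (Lemma~\ref{lem2}) to telescope $\sum_k \mathbb{E}[\Delta V(k)\mid x_k,r(k)]$ into $\mathbb{E}[V(x(N+1),r(N+1))\mid x_0,r_0] - x_0^T P_{r_0} x_0 \ge -x_0^T P_{r_0} x_0$ (using $P_i > 0$), I obtain
\[
J \le x_0^T P_{r_0} x_0 + \gamma^2\, \mathbb{E}\Bigl[\textstyle\sum_{k=0}^N \bigl(w(k)^T w(k) + v(k)^T v(k)\bigr) \,\Bigm|\, x_0, r_0\Bigr],
\]
which is exactly \eqref{eq:42} with $M_2 = x_0^T P_{r_0} x_0$, a constant independent of $N$, $r$, $w$, $v$. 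The hard part will be the careful bookkeeping in the expansion: tracking which cross terms survive the conditional expectation (only $x$--$v$, since $v$ is a bounded deterministic/adversarial signal while $w$ is zero-mean white and independent), correctly reducing the two families of $w$-quadratic terms to the single trace $\text{tr}(\Pi_{33} + \gamma^2 I)$, and — crucially — recognizing that the residual constant $\gamma^2 n_3$ is per step, hence equal to $\gamma^2 \mathbb{E}[w(k)^T w(k)]$ and must be absorbed into the $\gamma^2 \mathbb{E}[\sum w^T w]$ term of $\bar J$ (it grows with $N$) rather than into $M_2$. Everything else — applying $\Pi<0$, the Dynkin telescoping, and $V\ge 0$ — follows the template of Theorem~\ref{thm1} together with Lemmas~\ref{lem1}--\ref{lem2}.
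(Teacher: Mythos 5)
Your proposal is correct and follows essentially the same route as the paper's proof: establish the per-step dissipation inequality $\mathbb{E}[\Delta V + x^T Q x + u^T R u \mid x_k,i] \le \gamma^2\mathbb{E}[w^Tw + v^Tv]$ from the LMI \eqref{eq:43}, then telescope with Lemma~\ref{lem2} and use $V\ge 0$ to obtain \eqref{eq:42} with $M_2 = x_0^T P_{r_0}x_0$. Your bookkeeping — identifying the conditional expectation directly with $\mathbb{E}[\xi^T\Pi\xi]$ plus the $\gamma^2 v^Tv$ and $\gamma^2 n_3 = \gamma^2\mathbb{E}[w^Tw]$ compensation terms, and noting the latter must be absorbed into the noise sum rather than $M_2$ — is just a cleaner rendering of the paper's unit-vector $\bar w$ device for handling the trace terms.
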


\begin{proof}
With the control law \eqref{eq:7}, the cost function \eqref{eq:41} can be rewritten as
\begin{align}
    J = \mathbb{E} \left[\sum _{k=0}^{N} x(k)^T Q x(k) + (K_iy(k))^T R K_iy(k) \biggm| x_0,r_0\right] \label{eq:44}    
\end{align}
Equivalently, our goal is to construct a stochastic guaranteed cost for \eqref{eq:44}. 

Let $\xi = [w^T, v^T]^T$. We claim:
\begin{align}
    \mathbb{E}[\Delta V(k)|x_k,i]  &\le - x_k^TQx_k -  \mathbb{E}[(K_iy_k)^T R K_iy_k|x_k,i]\nonumber \\
    & \qquad \qquad+ \gamma^2 \mathbb{E}[\xi_k ^T \xi_k |x_k,i] \label{eq:45}
\end{align}
where $\Delta V$ is exactly previously defined referring to \eqref{eq:24}.
\textit{ Proof of claim.} Substituting \eqref{eq:6} and \eqref{eq:26} into \eqref{eq:45}, it follows that 
\begin{align}
    &\sum_{j\in \mathcal{I}} p_{ij} x_k^T(A_i^{cl})^TP_jA_i^{cl}x_k+ \sum_{j\in \mathcal{I}} p_{ij} x_k^T(A_i^{cl})^TP_j E_i^{cl}v(k) \nonumber \\
        & + \sum_{j\in \mathcal{I}} p_{ij} v(k)^T(E_i^{cl})^TP_jA_i^{cl}x_k + tr\left( \sum_{j\in \mathcal{I}} p_{ij}(D^{cl}_i)^TP_j D_i^{cl}\right) \nonumber \\
        & +\sum_{j\in \mathcal{I}} p_{ij}v(k)^T (E^{cl}_i)^TP_j  E_i^{cl} v(k) - x_k^T P_ix_k \le - x_k^TQx_k \nonumber \\
        & -(K_i C_i x_k)^T RK_i C_i x_k  -(K_i C_i x_k)^T R K_i E_i v(k) \nonumber \\
        &- (K_i E_i v(k))^T RK_i C_i x_k- tr\left((K_i D_i )^T RK_i D_i  \right)\nonumber \\
        & -(K_i E_i v(k))^T RK_i E_i v(k) +\gamma^2  + \gamma^2 v(k)^Tv(k) \label{eq:46}
\end{align}
Let $\bar{w}$ such that $\bar{w} ^T \bar{w} =1$ . Thus, 
\begin{align}
    &tr\left( \sum_{j\in \mathcal{I}} p_{ij}(D^{cl}_i)^TP_j D_i^{cl}\right) = \sum_{j\in \mathcal{I}} p_{ij} \bar{w}(k)^T(D^{cl}_i)^TP_j D_i^{cl}\bar{w}(k) \nonumber \\
    &tr\left((K_i D_i )^T RK_i D_i  \right) = (K_i D_i \bar{w}(k))^T RK_i D_i\bar{w}(k) \nonumber \\  
    &\gamma ^2 = \gamma ^2 \bar w(k)^T \bar w(k) \label{eq:47}
\end{align}
With \eqref{eq:47}, it is easy to verify that \eqref{eq:43} implies \eqref{eq:46}. This proves the claim. \qed

Taking expectation and applying Lemma \ref{lem2} to \eqref{eq:45}, it yields that
\begin{align}
    &\mathbb{E}[V(x(N),r(N))| x_0,r_0] - V(x_0,r_0) \nonumber \\
    & \qquad \le  \mathbb{E} \left[\sum _{k=0}^{N} -x(k)^T Q x(k) - (K_iy(k))^T R K_iy(k) \right.\nonumber \\
     & \qquad \qquad \qquad \left.+ \gamma^2 \xi_k ^T \xi_k  \biggm| x_0,r_0\right] 
\end{align}
By definition \eqref{eq:23}, $V$ is positive definite, and hence,
\begin{align}
    \mathbb{E} \left[\sum _{k=0}^{N} x(k)^\top  Q x(k) + (K_iy(k))^T R K_iy(k) \biggm| x_0,r_0\right]  \nonumber \\
    \le \gamma ^2 \mathbb{E} \left[\sum _{k=0}^{N} \xi_k ^T \xi_k  \biggm| x_0,r_0\right]  + V(x_0,r_0)
\end{align}
That is, a stochastic guaranteed cost is constructed, which completes the proof.\end{proof}

\begin{myremark}
    It is not difficult to see that the condition \eqref{eq:43} implies the condition \eqref{eq:22} when looking at $\Pi _{11}$, which in turn ensures the stochastic stability in light of Theorem 1.
\end{myremark}

\begin{myremark}
    Theorem \ref{thm3} can be used to test how robust an ADS can be to the AI-induced uncertainties in the sense of \eqref{eq:42} in terms of the metric $\gamma$. For example, we can simply minimize $\gamma$ subject to the LMI constraint \eqref{eq:43}.
\end{myremark}

\subsection{Stochastic Optimal Guaranteed Cost Control}
In practice, we not only would like to test a given system, but also design an ADS with an optimal guaranteed cost. This subsection aims to extend Theorem \ref{thm3} to address the problem of stochastic optimal control synthesis. 
\begin{figure*}[!t]
   \begin{align}
        &        \begin{bmatrix}
        -S_i&  0 & 0 &0  & (W_iC_i)^T  & 0 & [M_i(AS_i +BW_iC_i)]^T  & S_i\\
        * &  -  \gamma^2 \lambda^2  I & 0 & 0 & (W_iE_i)^T &0 & ( M_iBW_iE_i)^T & 0\\
        * & * & -  \gamma^2 \lambda^2  I & (W_iD_i)^T& 0 & (M_iBW_iD_i)^T & 0  & 0\\
        * & * & * & -R^{-1} & 0 & 0 & 0 & 0\\
        * & * & * & * & -R^{-1} & 0 & 0  & 0\\
        * & * & * & * & * & -\Lambda & 0 & 0\\
        * & * & * & * & * & * & -\Lambda  & 0\\
        * & * & * & * & * & * & *  & -Q^{-1}
    \end{bmatrix} <0 \label{eq:50} \\
    & S_i > \lambda I, \quad C_iS_i = Y_i C_i, \quad D_iS_i = Y_i D_i, \quad E_iS_i = Y_i E_i  \label{eq:51}
    \end{align}
    \end{figure*}
\begin{mytheorem} \label{thm4}
    Consider system \eqref{eq:5}, \eqref{eq:6} with the control law \eqref{eq:7}. There is a stochastic guaranteed cost controller in the sense of \eqref{eq:42}, if for a given $\gamma$, there exist matrices $ S_i,  Y_i >0$, $W_i$ with compatible dimensions, and $\lambda>0$ such that the linear matrix inequalities \eqref{eq:50}, \eqref{eq:51} hold for all $i \in \mathcal{I}$, with $M_i$ and $\Lambda$ being defined as
\begin{align*}
    M_i &= [\sqrt{p_{i1} }I, \cdots,\sqrt{p_{iN} }I ]^T\\
    \Lambda &= \text{diag}(S_1, \cdots,S_N)
\end{align*}
    In particular, a stochastic $\gamma$-guaranteed cost control gain is solved as $K_i= W_i Y_i^{-1}$.
\end{mytheorem}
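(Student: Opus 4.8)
The plan is to reduce Theorem~\ref{thm4} to Theorem~\ref{thm3}. Concretely, setting $P_{i}=S_{i}^{-1}$ and $K_{i}=W_{i}Y_{i}^{-1}$, I want to show that feasibility of \eqref{eq:50}--\eqref{eq:51} implies the LMI \eqref{eq:43} of Theorem~\ref{thm3}; once that is done, Theorem~\ref{thm3} immediately certifies a stochastic $\gamma$-guaranteed cost with gain $K_{i}=W_{i}Y_{i}^{-1}$. So the whole proof is a chain of equivalence-preserving LMI manipulations linking \eqref{eq:50}--\eqref{eq:51} to \eqref{eq:43}, with a single one-way relaxation at the end.

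The first step is to expose the structure of \eqref{eq:43}. Writing $\Lambda_{P}=\mathrm{diag}(P_{1},\dots,P_{N})$ and using $\sum_{j\in\mathcal{I}}p_{ij}(\cdot)^{T}P_{j}(\cdot)=(\cdot)^{T}M_{i}^{T}\Lambda_{P}M_{i}(\cdot)$, the block matrix in \eqref{eq:43} is the constant part $\mathrm{diag}(-P_{i},-\gamma^{2}I,-\gamma^{2}I)$ plus five ``quadratic'' terms of the form $G^{T}NG$: two with $N=\Lambda_{P}$ (one acting on the $x$--$v$ channel with $G=M_{i}[A_{i}^{cl}\ E_{i}^{cl}\ 0]$, one on the $w$ channel with $G=M_{i}[0\ 0\ D_{i}^{cl}]$; this split is exactly what produces the zero $(1,3)$ and $(2,3)$ blocks of \eqref{eq:43}), two with $N=R$ (with $G=[K_{i}C_{i}\ K_{i}E_{i}\ 0]$ and $G=[0\ 0\ K_{i}D_{i}]$), and one with $N=Q$ (with $G=[I\ 0\ 0]$). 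I then apply, once per quadratic term, the elementary equivalence: for $N>0$, $X+G^{T}NG<0$ holds iff the symmetric block matrix with diagonal blocks $X$ and $-N^{-1}$ and off-diagonal block $G^{T}$ is negative definite. This peels off five auxiliary diagonal blocks $-Q^{-1}$, $-R^{-1}$ (twice), and $-\Lambda_{P}^{-1}=-\mathrm{diag}(S_{1},\dots,S_{N})=-\Lambda$ (twice), producing an $8\times 8$ enlarged LMI $\mathcal{E}$ equivalent to \eqref{eq:43}, whose coupling columns contain $K_{i}C_{i}$, $K_{i}E_{i}$, $K_{i}D_{i}$, $M_{i}A_{i}^{cl}$, $M_{i}E_{i}^{cl}$, $M_{i}D_{i}^{cl}$ and $I$.

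The second step is a congruence transformation of $\mathcal{E}$ by $T=\mathrm{diag}(S_{i},S_{i},S_{i},I,\dots,I)$, with $S_{i}=P_{i}^{-1}$ acting on the three plant blocks and the identity on the five auxiliary blocks. This sends $-P_{i}$ to $-S_{i}$, sends each $-\gamma^{2}I$ to $-\gamma^{2}S_{i}^{2}$, leaves $-Q^{-1},-R^{-1},-\Lambda$ untouched, and multiplies every coupling term of the form $K_{i}(\cdot)$ or $A+BK_{i}(\cdot)$ on the right by $S_{i}$. Here the structural identities of \eqref{eq:51}, namely $C_{i}S_{i}=Y_{i}C_{i}$, $D_{i}S_{i}=Y_{i}D_{i}$, $E_{i}S_{i}=Y_{i}E_{i}$ together with $W_{i}=K_{i}Y_{i}$, are precisely what linearise those products: $K_{i}C_{i}S_{i}=W_{i}C_{i}$, $(A+BK_{i}C_{i})S_{i}=AS_{i}+BW_{i}C_{i}$, $BK_{i}E_{i}S_{i}=BW_{i}E_{i}$, $BK_{i}D_{i}S_{i}=BW_{i}D_{i}$, while the coupling against $-Q^{-1}$ simply becomes $S_{i}$. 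The resulting matrix coincides with \eqref{eq:50} except that its $(2,2)$ and $(3,3)$ blocks are $-\gamma^{2}S_{i}^{2}$ rather than $-\gamma^{2}\lambda^{2}I$. Finally, $S_{i}>\lambda I>0$ (from \eqref{eq:51}) gives $S_{i}^{2}>\lambda^{2}I$, so replacing $-\gamma^{2}\lambda^{2}I$ by $-\gamma^{2}S_{i}^{2}$ in the $(2,2)$ and $(3,3)$ positions only decreases the matrix; hence \eqref{eq:50}$<0$ forces that modified block matrix to be $<0$, and reversing the congruence and then the five Schur complements turns it into \eqref{eq:43}$<0$. Theorem~\ref{thm3} then completes the argument, with admissible gain $K_{i}=W_{i}Y_{i}^{-1}$.

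The step I expect to be the actual work is purely organisational bookkeeping: matching the $8\times 8$ pattern of \eqref{eq:50} forces one to track which quadratic term feeds which auxiliary block, which of the $x,v,w$ channels each coupling column touches (this is what makes so many off-diagonal entries vanish), and the order of the Schur complements and the congruence so that the identities \eqref{eq:51} act on exactly the right factors. The only genuinely one-directional step is the $S_{i}^{2}>\lambda^{2}I$ relaxation, so it is worth stressing that \eqref{eq:50}--\eqref{eq:51} are \emph{sufficient} (and in general conservative) for \eqref{eq:43}, the conservatism entering both through that relaxation and through imposing the commutation-type constraints \eqref{eq:51} rather than working with $K_{i}$ directly.
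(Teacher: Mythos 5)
Your proposal is correct and follows essentially the same route as the paper's proof: substitute $P_i=S_i^{-1}$, peel off the $Q$, $R$, and $\sum_j p_{ij}(\cdot)^TP_j(\cdot)=(\cdot)^TM_i^T\Lambda^{-1}M_i(\cdot)$ quadratic terms by Schur complements into the enlarged $8\times 8$ LMI, apply the congruence by $\mathrm{diag}(S_i,S_i,S_i,I,\dots,I)$, linearise via \eqref{eq:51} and $W_i=K_iY_i$, and close the one-directional gap with $S_i>\lambda I\Rightarrow S_i^2>\lambda^2 I$ before invoking Theorem~\ref{thm3}. The only difference is cosmetic (you perform all Schur complements at once rather than in the paper's staged order), and your remark that the sufficiency is conservative through both the $\lambda$-relaxation and the structural constraints \eqref{eq:51} is accurate.
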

\begin{proof}
    To apply Theorem \ref{thm3}, indeed, we aim to prove the condition \eqref{eq:43}. Let $P_i = S_i^{-1}$.  \eqref{eq:43} can be equivalently written into
\begin{align}
        \begin{bmatrix}
            \Gamma_{11} & \Gamma_{12} & 0\\
            * & \Gamma_{22} & 0 \\
            * & *  &  \Gamma_{33} 
        \end{bmatrix} + \begin{bmatrix}
            0 & (K_iC_i)^T \\
            0 & (K_iE_i)^T \\
            (K_iD_i)^T & 0
        \end{bmatrix} \begin{bmatrix}
            R & 0\\
            0 & R
        \end{bmatrix}\nonumber \\
         \times \begin{bmatrix}
            0 & 0 & K_iD_i \\
            K_iC_i & K_iE_i & 0 
        \end{bmatrix} \label{eq:52}
    \end{align}
with
\begin{align*}
    \Gamma_{11} &= \sum_{j\in \mathcal{I}} p_{ij}(A_i^{cl})^TS^{-1}_jA_i^{cl} - S^{-1}_i + Q \\
    \Gamma_{12} &= \sum_{j\in \mathcal{I}} p_{ij}(A_i^{cl})^TS^{-1}_jE_i^{cl} \\
    \Gamma_{22} &=  \sum_{j\in \mathcal{I}}p_{ij}(E_i^{cl})^TS^{-1}_jE_i^{cl} -\gamma ^2 I\\
    \Gamma_{33} &= \sum_{j\in \mathcal{I}}p_{ij}(D_i^{cl})^TS^{-1}_jD_i^{cl} - \gamma ^2I
\end{align*}
Applying Schur complement, \eqref{eq:52} is equivalent to 
\begin{align}
    \begin{bmatrix}
        \Gamma_{11} &  \Gamma_{12} & 0 &0  & (K_iC_i)^T  \\
        * & \Gamma_{22} & 0 & 0 & (K_iE_i)^T \\
        * & * & \Gamma_{33} & (K_iD_i)^T& 0\\
        * & * & * & -R^{-1} & 0 \\
        * & * & * & * & -R^{-1}
    \end{bmatrix} <0 \label{eq:52_}
\end{align}
By virtue of $\sum_{j\in \mathcal{I}} p_{ij}S_j^{-1} = M_i^T \Lambda^{-1} M_i$, \eqref{eq:52_} can be expressed equivalently as
 \begin{align}
    \begin{bmatrix}
        -S^{-1}_i+Q &  0 & 0 &0  & (K_iC_i)^T  \\
        * &  -\gamma^2I  & 0 & 0 & (K_iE_i)^T \\
        * & * & -\gamma^2I  & (K_iD_i)^T& 0\\
        * & * & * & -R^{-1} & 0 \\
        * & * & * & * & -R^{-1}
    \end{bmatrix} \nonumber \\
    + Y^T\begin{bmatrix}
        \Lambda^{-1} & 0\\
        0 & \Lambda^{-1} 
    \end{bmatrix}Y <0
    \end{align}
with
    \begin{align*}
        Y = \begin{bmatrix}
        0 & 0 &  M_i D_i^{cl}&0 & 0\\
        M_iA_i^{cl} & M_i E_i^{cl} & 0 & 0& 0
    \end{bmatrix}
    \end{align*}
Applying again Schur complement yields \eqref{eq:55}. 
\begin{figure*}[!t]
   \begin{align}
        \begin{bmatrix}
        -S_i^{-1}+Q &  0 & 0 &0  & (K_iC_i)^T  & 0 & ( M_iA_i^{cl})^T  \\
        * &  -\gamma^2I  & 0 & 0 & (K_iE_i)^T &0 & ( M_iE_i^{cl})^T \\
        * & * & -\gamma^2I  & (K_iD_i)^T& 0 & (M_iD_i^{cl})^T & 0 \\
        * & * & * & -R^{-1} & 0 & 0 & 0 \\
        * & * & * & * & -R^{-1} & 0 & 0 \\
        * & * & * & * & * & -\Lambda & 0\\
        * & * & * & * & * & * & -\Lambda 
    \end{bmatrix}<0 \label{eq:55} 
\end{align}
\end{figure*}
Taking a congruent transformation, we pre- and post-multiply \eqref{eq:55} by $\text{diag}(S_i,S_i,S_i,I,I,I,I)$ and obtain the equivalence~\eqref{eq:55_}.
\begin{figure*}[!t]
   \begin{align}
        \begin{bmatrix}
        -S_i&  0 & 0 &0  & (K_iC_iS_i)^T  & 0 & ( M_iA_i^{cl}S_i)^T  & S\\
        * &  -\gamma^2S_iS_i  & 0 & 0 & (K_iE_iS_i)^T &0 & ( M_iE_i^{cl}S_i)^T & 0\\
        * & * & -\gamma^2S_iS_i  & (K_iD_i S_i)^T& 0 & (M_iD_i^{cl}S_i)^T & 0  & 0\\
        * & * & * & -R^{-1} & 0 & 0 & 0 & 0\\
        * & * & * & * & -R^{-1} & 0 & 0  & 0\\
        * & * & * & * & * & -\Lambda & 0 & 0\\
        * & * & * & * & * & * & -\Lambda  & 0\\
        * & * & * & * & * & * & *  & -Q^{-1}
    \end{bmatrix}<0 \label{eq:55_} 
\end{align}
\end{figure*}

On the other hand, using condition \eqref{eq:51}, definition \eqref{eq:8}, and $W_i = K_iY_i$, it is easy to verify that \eqref{eq:50} implies \eqref{eq:55_}. This completes the proof.\end{proof}

\begin{myremark} \label{rmk:5}
    With Theorem \ref{thm4}, the stochastic optimal guaranteed cost control (SOGCC) can be reached by solving the following optimization problem in terms of decision variables $(S_i, Y_i,W_i, \gamma, \lambda)$
    \begin{alignat}{2} \label{eq:56}
        &\text{minimize}  && \quad \gamma \nonumber \\
        &\text{subject to}  && \quad\eqref{eq:50},\eqref{eq:51} \nonumber  \\
        &  && \quad S_i>0,Y_i>0, \lambda>0, \gamma>0 
    \end{alignat}
    The resulting SOGCC gains is recovered as $K^* = W^*(Y^*)^{-1}$.
\end{myremark}
\begin{myremark} \label{rmk:6}
    It is important to note that directly solving Remark~\ref{rmk:5} can be challenging, due to the bilinear term $\gamma^2 \lambda^2$ appearing in the (2,2) and (3,3) entries of  \eqref{eq:50}. To overcome it, one practical way is to fix $\lambda$ \textit{a priori} as a sufficiently small constant, thereby transforming the constraint into a linear form.
\end{myremark}

\section{Car following control} \label{sec5}
To illustrate the usefulness of the developed results, an automated driving example of car following control (CFC) is presented. In this task,  the goal of the ego vehicle is to follow the leading vehicle with a prescribed safe distance. 

The dynamics of the ego vehicle is represented as
\begin{align}
    {x}^{ego}_1(k+1) &= x^{ego}_1(k) + hx_2^{ego}(k) \nonumber \\ 
    {x}^{ego}_2(k+1) &= x_2^{ego}(k) +  hu(k) \label{eq:58}
\end{align}
where $x_1^{ego}, x_2^{ego} \in \mathbb{R}$ denote the position and  velocity of the ego vehicle, respectively, $h\in \mathbb{R} $ the sampling interval, and $u \in \mathbb{R}$ the control input to be determined. 

The environment, in this example the leading vehicle, is modeled as
\begin{align}
    {x}^{ld}_1(k+1) &= x^{ld}_1(k) + hx_2^{ld}(k) \nonumber \\ 
    {x}^{ld}_2(k+1) &= x_2^{ld}(k) \label{eq:59}    
\end{align}
where $x_1^{ld}, x_2^{ld} \in \mathbb{R}$ denote the position and the velocity of the leading vehicle, respectively.

Let ${x}_1 = x_1^{ego} - x_1^{ld} -  \delta^{safe}$, ${x}_2 = x_2^{ego} - x_2^{ld}$, and ${x} = [{x}_1, {x}_2]^T$; note that $\delta^{safe}$ denotes the desired safety distance to be maintained. Together with \eqref{eq:58} and \eqref{eq:59}, the error dynamics of CFC and the AI-based measurement are given as
\begin{align}
    &x(k+1) = A x(k) + Bu(k)  \nonumber \\
    &y(k) =  C_i x(k) + D_i w(k) + E_i v(k) \label{eq:60}
\end{align}
with
\begin{align*}
    A = \begin{bmatrix}
        1 & h \\
        0 & 1
    \end{bmatrix}, ~ B = \begin{bmatrix}
        0 \\
        h 
    \end{bmatrix},~C_i = \begin{cases}
\text{diag}(0,1), & i =0 \\
\text{diag}(1,1), & i = 1
\end{cases} \\
D_i = \begin{cases}
\text{diag}(d_{00},d_{01}), & i = 0  \\
\text{diag}(d_{10},d_{11}), & i = 1 
\end{cases}, ~E_i = \begin{cases}
\text{diag}(e_{00},e_{01}), & i = 0   \\
\text{diag}(e_{10},e_{11}), & i = 1
\end{cases}
\end{align*}
\begin{myremark}
It is worth noting that mode~1 ($i=1$) means a successful measurement from AI-based perception, while mode~0 ($i=0$) denotes the underlying misdetection phenomenon in perception processes. This switching characteristic, i.e., $i=r(t)$, follows a given Markov chain, which is defined by the following transition probability matrix
\begin{align}
    \Xi = \begin{bmatrix}
        p_{00} & p_{01}\\
        p_{10} & p_{11}
    \end{bmatrix}
\end{align}
\end{myremark}
\begin{myremark}
The CFC system \eqref{eq:60} obtained represents a particular example of our main results. Hence, the analysis and synthesis of CFC can be systematically conducted by following the procedures outlined in Theorems~\ref{thm1}--\ref{thm4}.
\end{myremark}

\section{experiments} \label{sec6}
To verify the effectiveness of the presented control design methods, this section gives illustrative examples. The considered car following system is affected by measurement noise, bias and misdetection. The system parameters are specified as follows: $h=0.01$, and  $D_i$ are set as $d_{00} = 0.01$, $d_{01}= 0.05$, $d_{10} = 0.01$ and $d_{11}=0.05$; $E_i$ are given by $e_{00}= e_{01} = e_{10} =e_{11} = 0.01$; $w({k})$ follows the standard Gaussian distribution and $v(k) = [-1,-1]^T$. The probability transition matrix $\Xi$ are defined as $p_{00} = 0.7$, $p_{01} = 0.3$, $p_{10} = 0.2$, and $p_{11} = 0.8$. A sample path of this Markov chain is illustrated in Fig.~\ref{fig:1}, which indicates the operational status of the equipped perception system. It can be observed that a high misdetection rate occurs, and this erratic, intermittent perception can pose significant risks to an autonomous driving system. The control law \eqref{eq:7} is used, and the SSC control design is performed by virtue of Theorem \ref{thm:2}, resulting in $K_{ssc}(0)=[0,-101]$ and $K_{ssc}(1)= [-0.45, -100]$.  Setting $Q = \text{diag}(10,10)$, $R= 1$, and $\lambda=10^{-5}$ and solving problem \eqref{eq:56}, the SOGCC control gains are obtained as $K_{sogcc}(0)= [0, -3.6] $ and $K_{sogcc}(1)= [-1.22, -2.66]$.
\begin{figure}[!htbp]
\centering
\includegraphics[width=0.5\textwidth,trim=0.1cm 0.3cm 0.1cm 0.1cm, clip]{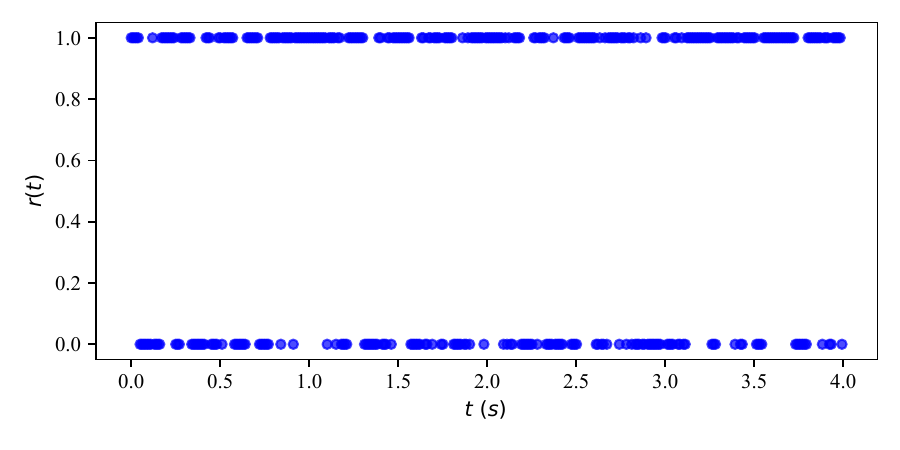}
\caption{Status of the AI-based perception system.}
\label{fig:1}
\end{figure}

The desired relative distance between the leading and ego vehicles are set to $\delta _1^d =- 5$. The initial states of the car following system are given as $[x_1^{ego}(0),x_2^{ego}(0)]^T = [0,1]^T$, $[x_1^{ld}(0),x_2^{ld}(0)]^T = [10,5]^T$. The intelligent driving model (IDM), a widely acknowledged method in the AV field, is used for comparison. Given the stochastic nature of the scenario, the simulations are conducted over 200 independent trials. We present both mean values and standard deviation for each quantity of interest with the shaded plots to illustrate the variation across trials. The results are reported in Figs.~\ref{fig:2}--\ref{fig:4}. As shown in Fig.~\ref{fig:2},  only the root mean square error (RMSE) of the SOGCC approach converges within the simulation period and reaches a very small steady-state error. This demonstrates the superior performance of the SOGCC method. While the SSC approach may also exhibit a convergence in RMSE, much more time is required. In contrast, the IDM method fails to converge under the high misdetection conditions. The non-convergence may lead to highly risky and unsafe driving, such as collisions. This safety concern is further evidenced in Fig.~\ref{fig:3}, where the shaded red region indicates the collision zone. It is clear that the trajectories generated by the IDM policy intersect this region, showing multiple potential collisions. In addition, the behavior of IDM also becomes extremely unpredictable, as indicated by its large standard deviation.   On the other hand, the SOGCC approach converges smoothly to the desired safe distance with high confidence despite the adverse misdetection and noisy, biased perception, whereas the SSC converges much slower, driving more conservatively yet without any collisions. Fig.~\ref{fig:4} illustrates the control actions produced by the three methods. The SSC policy exhibits highly fluctuating and excessive actions, which will definitely lead to discomfort for passengers. In comparison, the SOGCC method generates the smoothest and most reasonable control signals, achieving a favorable trade-off between performance, safety, and comfort.
\begin{figure}[!htbp]
\centering
\includegraphics[width=0.5\textwidth,trim=0.2cm 0.2cm 0cm 0.2cm, clip]{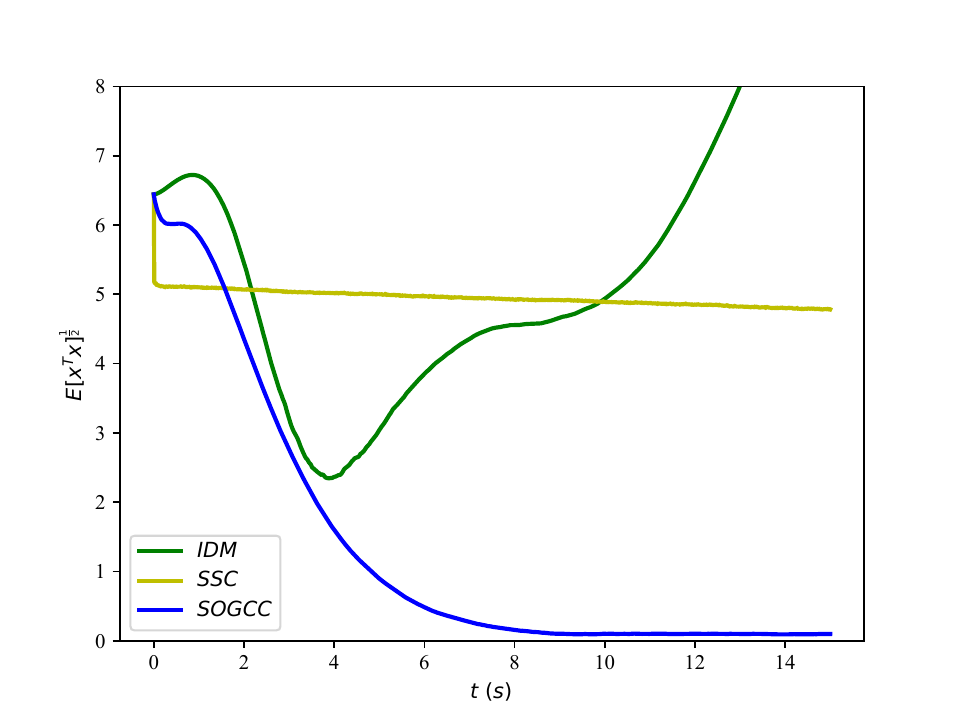}
\caption{Root mean square error performance of three different approaches.}
\label{fig:2}
\end{figure}

\begin{figure}[!htbp]
\centering
\includegraphics[width=0.47\textwidth,trim=0.2cm 0.2cm 0cm 0.2cm, clip]{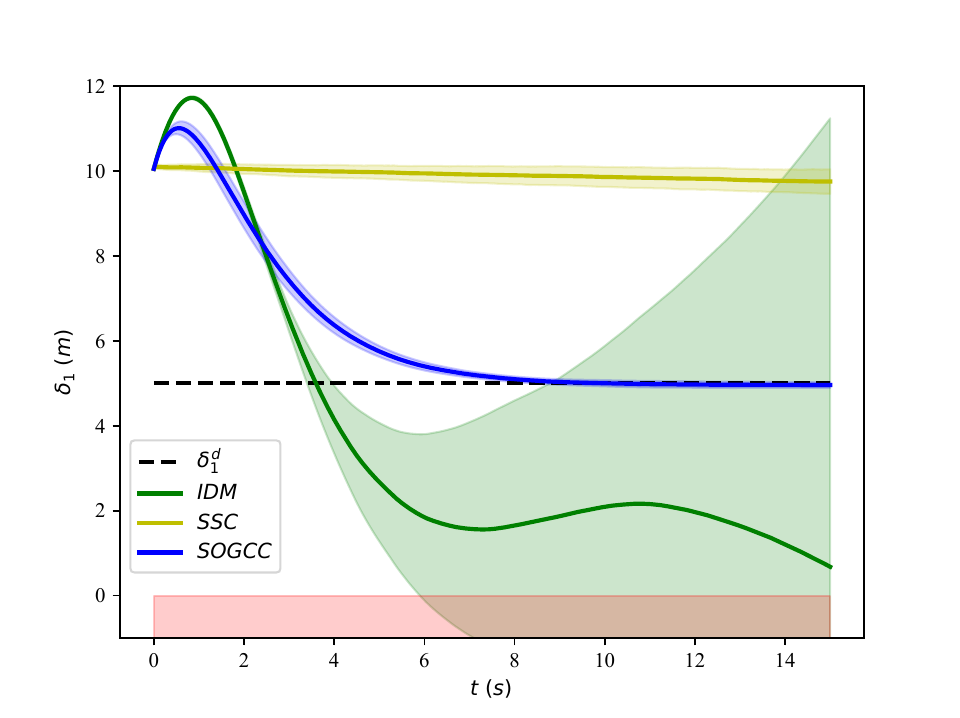}
\caption{ Relative distance evolution between leading and ego vehicles.}
\label{fig:3}
\end{figure}

\begin{figure}[!htbp]
\centering
\includegraphics[width=0.47\textwidth]{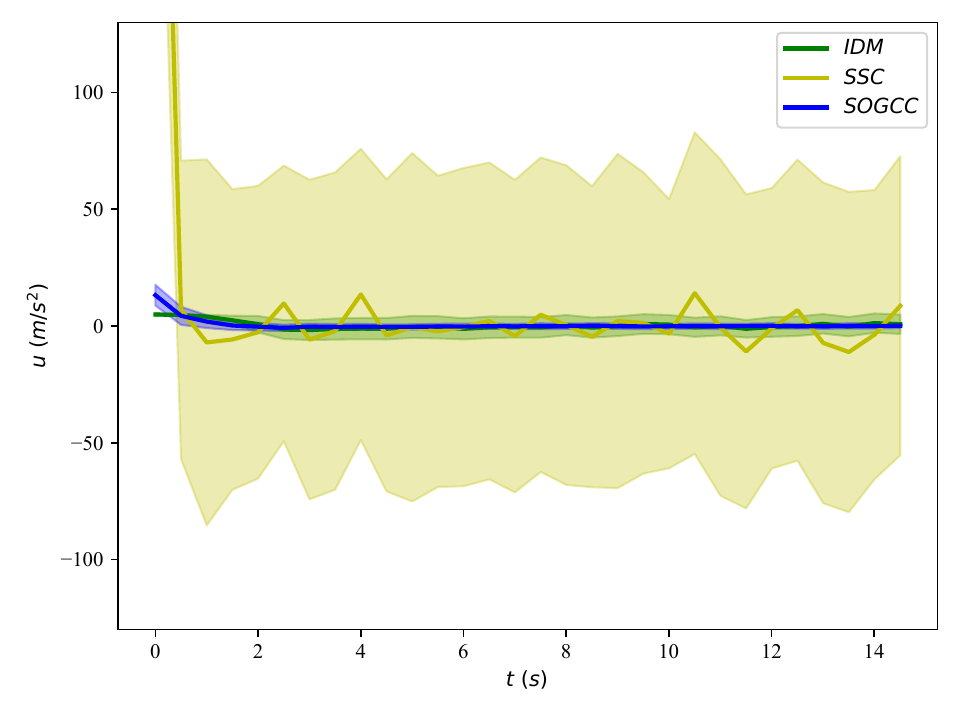}
\caption{The control actions generated by three control schemes.}
\label{fig:4}
\end{figure}

Based on the experiment results, it is evident that adverse perception conditions, particularly when a high rate of misdetection presents, can significantly degrade the performance of ADSs. Failure to address it can introduce high risks and lead to unsafe driving behavior, as exemplified  in Fig.~\ref{fig:3}. The proposed two control methods effectively handle this challenge by  incorporating misdetection explicitly into the control design, thus greatly improving the reliability of autonomous driving. In particular, it turns out that the proposed stochastic optimal guaranteed cost control can simultaneously maintain robustness and performance even in the presence of various perception uncertainties.  

\section{conclusion} \label{sec7}
This paper models AI-based ADSs as a new class of control systems that are affected by unreliable perception, including stochastic jumping, noise, and unknown, time-varying bias. Sufficient conditions are developed for both stochastic stability analysis and stabilizing control synthesis.  A novel concept of stochastic guaranteed costs is introduced to quantify performance and robustness of systems subject to heterogeneous sources of perception uncertainty. Furthermore, an efficient convex approximation is proposed to solve stochastic optimal guaranteed cost control which is not easy to resolve in existing frameworks. The techniques are validated through a car following scenario. The  experimental results show the effectiveness of the proposed SOGCC method in balancing reliability, performance, and passenger comfort under adverse sensing and perception conditions.



\bibliographystyle{IEEEtran}
\bibliography{reference1}

\end{document}